\pgfplotsset{compat=1.15}
\newcommand*{\PROOFS}{}
\newtheorem{thm}{Theorem}
\newtheorem{cor}[thm]{Corollary}
\newtheorem{lem}[thm]{Lemma}
\newtheorem{prop}[thm]{Proposition}
\theoremstyle{definition}
\newtheorem{assumption}{Assumption}
\newtheorem{definition}{Definition}
\tikzstyle{sum}=[circle, fill=blue!10, draw=black,line width=1pt,minimum size = 0.5cm, thick ]
\tikzstyle{ssum}=[circle, fill=blue!10,draw=black,line width=1pt,minimum size = 0.1cm]
\tikzstyle{ssum_red}=[circle, fill=red!10,draw=black,line width=1pt,minimum size = 0.1cm]
\tikzstyle{int1}=[draw, fill=blue!10, minimum height = 0.5cm, minimum width=0.5cm,thick ]
\tikzstyle{int1_red}=[draw, fill=red!10, minimum height = 0.5cm, minimum width=0.5cm,thick ]
\tikzstyle{int}=[draw, fill=blue!10, minimum height = 0.8cm, minimum width=0.7cm,thick ]
\definecolor{mygrey}{RGB}{229,229,229}
\definecolor{mygrey2}{RGB}{127,127,127}
\definecolor{mygrey3}{RGB}{240,240,240}
\pgfplotsset{
 	axis background/.style={fill=mygrey},
	tick style=mygrey2,
	tick label style=mygrey2,
	grid=both,
	xtick pos=left,
	ytick pos=left,
	tick style={
		major grid style={style=white,line width=1pt},minor grid style=mygrey3,
		tick align=outside,
	},
	minor tick num=1,
}
\newcommand*{\QEDA}{\hfill\ensuremath{\square}}
\newcommand*{\Var}{\mathrm{Var}}
\newcommand*{\Mcal}{\mathcal{M}}
\newcommand*{\Ncal}{\mathcal{N}}
\newcommand*{\Ccal}{\mathcal{C}}
\newcommand*{\Scal}{\mathcal{S}}
\newcommand{\cC}{\mathcal{C}}
\newcommand{\cP}{\mathcal{P}}
\newcommand*{\sph}{\mathsf{sp}}
\newcommand*{\Prm}{P}
\newcommand*{\one}{\mathbf{1}}
\newcommand{\Dsp}{D^{\sph}}
\newcommand{\bEx}{\ensuremath{\mathbb{E}}}
\newcommand{\dd}{\mathrm{d}}
\DeclareMathOperator{\var}{\mathsf{Var}}
\newcommand{\eps}{\epsilon}
\newcommand{\ex}[1]{\ensuremath{\mathbb{E}\left[ #1\right]}}
\newcommand{\pr}[1]{\ensuremath{\mathbb{P}\left[ #1\right]}}
\newcommand{\reals}{\mathbb{R}}
\newcommand{\normal}{\mathcal{N}}
\newcommand{\cM}{\mathcal{M}}
\newcommand{\cS}{\mathcal{S}}
\DeclareMathOperator{\Lip}{Lip}
\DeclareMathOperator{\cov}{\mathsf{Cov}}
\let\originalleft\left
\let\originalright\right
\renewcommand{\left}{\mathopen{}\mathclose\bgroup\originalleft}
\renewcommand{\right}{\aftergroup\egroup\originalright}
\newif\iflongpaper
\author{%
  \IEEEauthorblockN{Alon Kipnis\IEEEauthorrefmark{1}}
  and
  \IEEEauthorblockN{Galen Reeves\IEEEauthorrefmark{2}}
  
  \IEEEauthorblockA{\IEEEauthorrefmark{1}Department of Statistics,
  Stanford University, Stanford, CA 94305 USA}
  
  \IEEEauthorblockA{\IEEEauthorrefmark{2}Department of Electrical and Computer Engineering and 
Department of Statistical Science, Duke University, 
Durham, NC 27708 USA}

\thanks{This paper was presented in part at the IEEE International Symposium on Information Theory (ISIT), 2019 \cite{KipnisReevesISIT2019}. 
}
}
\title{
Gaussian Approximation of Quantization Error for
Estimation from Compressed Data
}
\begin{document}

\maketitle

\begin{abstract}
We consider the distributional connection between the lossy compressed representation of a high-dimensional signal $X$ using a random spherical code and the observation of $X$ under an additive white Gaussian noise (AWGN). We show that the Wasserstein distance between a bitrate-$R$ compressed version of $X$ and its observation under an AWGN-channel of signal-to-noise ratio $2^{2R}-1$ is bounded in the problem dimension. We utilize this fact to connect the risk of an estimator based on the compressed version of $X$ to the risk attained by the same estimator when fed the AWGN-corrupted version of $X$. We demonstrate the usefulness of this connection by deriving various novel results for inference problems under compression constraints, including  minimax estimation, sparse regression, compressed sensing, and universality of linear estimation in remote source coding.
\end{abstract}

\begin{IEEEkeywords}
lossy source coding; spherical coding; Gaussian noise; parameter estimation; indirect source coding; sparse regression; approximate message passing;   
\end{IEEEkeywords}

\section{Introduction}
%
%
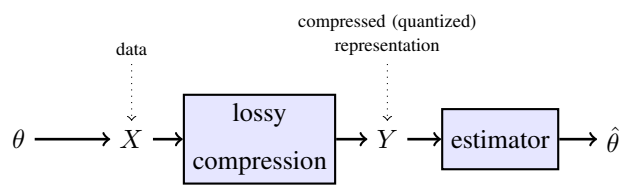
\begin{figure}[b]
\centering
\begin{tikzpicture}[scale =1]
\node (x) at (0,0) {$X$};
\node[left of = x, node distance = 1.5cm] (theta) {$\theta$};
\draw[->, line width = 1pt] (theta)--(x);

\node[int1, align = center, scale = 1, right of = x, node distance = 1.7cm] (comp) {lossy \\ compression};

\node[right of = comp, node distance = 1.7 cm] (y) {$Y$};

\node[int, right of = y, node distance = 1.5cm] (est) {estimator};

\draw[->,dotted] (x)+(0,1) node[above] {\scriptsize data} -- (x);

\draw[->,dotted] (y)+(0,1) node[above,align=center,execute at begin node=\setlength{\baselineskip}{2ex}] {\scriptsize compressed (quantized)\\ \scriptsize representation} -- (y);

\node[right of = est, node distance = 1.5 cm] (thh) {$\hat{\theta}$};

\draw[->, line width = 1pt] (x) -- (comp);
\draw[->, line width = 1pt] (comp) -- (y);
\draw[->, line width = 1pt] (y) -- (est);
\draw[->, line width = 1pt] (est) -- (thh);

\end{tikzpicture}
\caption{
Inference about the latent signal $\theta$ is based on degraded observations $Y$ of the data $X$.
\label{fig:intro}}
\end{figure}
\begin{figure}[t]
\centering
\begin{tikzpicture}[scale =1]
\node (x) at (0,0) {$X$};
\node[left of = x, node distance = 1.5cm] (theta) {$\theta$};
\draw[->, line width = 1pt] (theta)--(x);

\node[int1, align = center, scale = 1] at (1,1) (enc) {Enc};
\node[int1, right of = enc, node distance = 2.8cm, align = center, scale = 1](dec) {Dec};

\begin{pgfonlayer}{bg}    
        \fill[fill = gray!20] (enc)+(-0.75,-0.5) rectangle +(3.5,0.7);
        \node[above right = 0.35cm and -0.3cm of enc] {\small lossy compression};
\end{pgfonlayer}

\node[right of = dec, node distance = 1.2 cm] (y) {$Y$}; 

\node[circle, draw, align = center, scale = 1] at (2.3,-1)(plus) {+};
\node[above of = plus, node distance = 1cm, scale = 1] (noise) {$\normal(0, \sigma^2 I_n)$};
\node[below of = y, node distance = 2cm] (z) {$Z$};

\draw[->, line width = 1pt] (x) |- (enc);
\draw[->, line width = 1pt] (enc) -- (dec)node[midway, above]{$ \{1,\dots, M\}$};
\draw[->, line width = 1pt] (dec) -- (y);
\draw[->, line width = 1pt] (x) |-(plus);
\draw[->, line width = 1pt] (plus) -- (z);
\draw[->, line width = 1pt] (noise) -- (plus);

\end{tikzpicture}
\caption{
The effect of a bitrate constraint is compared to the effect of additive Gaussian noise by studying the Wasserstein distance between $\Prm_{Y}$ and $\Prm_{Z}$. Under random spherical encoding, we show that this distance is bounded in the problem dimension $n$, hence estimating $\theta$ from $Y$ is equivalent to estimating it from $Z$. 
\label{fig:contribution}}
\end{figure}
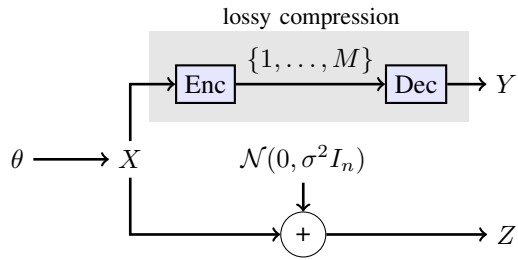

Due to the disproportionate size of modern datasets compared to available computing and communication resources, many inference techniques are applied to a compressed representation of the data rather than the data itself (Figure~\ref{fig:intro}). In the attempt to develop and analyze inference techniques based on a degraded version of the data, it is conceptually appealing to model inaccuracies resulting from lossy compression as additive noise. Indeed, there exists a rich literature devoted to the characterization of this ``noise'', i.e., the difference between the original data and its compressed representation \cite{gray1998quantization}. Nevertheless, because of the difficulty of analyzing non-linear compression operations, this characterization is generally limited to the high-bit compression regime and other restrictions on the distribution of the data
\cite{BLTJ1340, 1424312, 370112, lee1996asymptotic, kontoyiannis2006mismatched}. \par 
In this paper, we establish a strong and relatively simple characterization of the distribution of quantization error corresponding to a random spherical code. Specifically, we show that, in the sense of the Wasserstein distance, this error can be approximated by additive white Gaussian noise (AWGN) whose variance $\sigma^2$ is inversely proportional to $2^{2R}-1$ where $R$ is the bitrate of the code (Figure~\ref{fig:contribution}). This approximation implies that the expected error of an estimator applied to the compressed representation of the data is asymptotically equivalent to the  expected error of the same estimator applied to a Gaussian noise-corrupted version of the data. 
The benefit from such approximation is twofold: (1) inference techniques from  observations corrupted by Gaussian noise can now be applied directly to the compressed representation; and (2) it provides a mechanism to characterize the performance of inference using such techniques. 

\subsection{Overview of Main Contributions}
The equivalence illustrated in Figure~\ref{fig:contribution} allows us to derive various novel results for two closely related inference settings, both of which are performed on a lossy compressed representation of the observed data $X = (X_1, \dots, X_n)$.
\begin{itemize}
\item\emph{Parameter Estimation:} (Section~\ref{sec:parameter_estimation}) The data are drawn according to a distribution indexed by an unknown $d$-dimensional parameter vector $\theta$ and the goal is to estimate the parameter vector under the squared error loss. In the high-dimensional setting, the number of parameters $d$ is possibly much larger than the number of observations $n$. This problem is also related to learning distributions under communication constraints \cite{tsitsiklis1988decentralized,zhang1988estimation, HanAmari1998, steinhardt2015minimax,zhu2017quantized,KipnisDuchi,han2018distributed, dagan2018detecting,  szabo2018adaptive, barnes2019learning}. 
\item \emph{Indirect Source Coding:} (Section~\ref{sec:source_coding}) The data are distributed jointly with an unknown random (source) vector $U = (U_1,\ldots,U_n)$ and the goal is to reconstruct this vector from the compressed representation of $X$ \cite{DobrushinTsybakov,berger1971rate,1054469, 1056251}.
\end{itemize}
At a high level, the main difference between these inference tasks is that the source coding problem assumes a joint distribution over the data and the quantities of interest. Beyond these settings, one may may also consider minimax and universal source coding formulation \cite{dembo2003minimax, weissman2004universally} as well as hypotheses testing \cite{HanAmari1998,tsitsiklis1988decentralized}. \par
In the parameter estimation setting, we consider the minimax mean-squared error (MSE) 
\begin{align*}
\cM_{n}^*  \coloneqq
\inf_{\phi,\psi}  \sup_{\theta \in \Theta_n} \frac{1}{d_n}  \bEx \left[  \| \theta  - \psi(\phi(X))\|^2 \right] ,
\end{align*}
where the infimum is over all encoding functions $\phi \colon \reals^n \to \{1, \dots, M\}$ and decoding functions $\psi\colon \{1, \dots, M\} \to \reals^d$ with $M = \lceil 2^{nR} \rceil$.
Zhu and Lafferty~\cite{zhu2014quantized} provided an asymptotic expression for $\Mcal^*_n$ in the special case of the Gaussian location model $X \sim \Ncal(\theta, \eps^2 I_n)$ where the parameter space $\Theta_n$ is an $n$-dimensional ball. Under a similar setting, our main results yield a non-asymptotic upper bound to $\Mcal^*_n$. Furthermore, under the additional assumption that $\theta$ is $k$-sparse, our main results implies that $\Mcal^*_n$ is upper bounded by a univariate function describing the minimax risk of soft-thresholding in sparse estimation \cite{Donoho1994, johnstone2011gaussian}. 
%
Finally, we consider the case where the data $X$ and the parameter $\theta$ are described by the model $X \sim \Ncal(A \theta , \epsilon^2 I)$, where $A \in \reals ^{d \times n}$ is a random matrix with i.i.d. Gaussian entries.
This setting with $\theta$ sparse and $d_n$  much larger than $n$ was studied in the context of the compressed sensing signal acquisition frameworks \cite{eldar2012compressed}. By applying our main results to estimation with the approximate message passing (AMP) algorithm \cite{donoho2009message},
we provide an exact asymptotic characterization of the MSE in recovering $\theta$ from a lossy compressed version of $X$ obtained using bitrate-$R$ random spherical coding. Versions of this compression and estimation problem for other type of lossy compression codes and estimators were considered in \cite{goyal2008compressive,baraniuk2017exponential,kipnis2017fundamental,8356140}. \par
The indirect source coding setting (other names are \emph{remote} or \emph{noisy} source coding and \emph{rate-constrained denoising}) corresponds to the case where $\{(U_i,X_i)\}_{i=1}^\infty$  
is an ergodic process 
with a finite second moment. A bitrate-$R$ spherical code is applied to $X = \{X_i\}_{i=1}^n$ while the goal is to estimate $\{U_n\}_{i=1}^n$ from the output $Y$ of this code \cite{DobrushinTsybakov} \cite{1054469} \cite[Ch 3.5]{berger1971rate}\cite{1056251}. 
For data normalized as $\ex{\|X\|^2} = n$, our main results imply that the MSE attained by any sequence of Lipschtiz estimators converges to the MSE attained by these estimators when applied to $\{Z_i\}_{i=1}^n$, where 
\begin{align}
    \label{eq:Z_channel_intro}
 Z_i = X_i + \sigma W_i, \quad \sigma^2 =  \frac{1}{2^{2R}-1},
 \end{align}
 Specialized to the case $X_i \mid U_i \sim \Ncal(U_i,\eps^2)$, our result implies an interesting universality property of spherical coding followed by  linear estimation: the resulting MSE equals the minimal MSE, over all encoding and estimation schemes, when a Gaussian source of the same second moment is estimated from a bitrate-$R$ encoded version of its observations under AWGN. This fact can be seen as a direct extension of the saddle point property of the Gaussian distribution in the standard (direct) source coding setting discussed in \cite{sakrison1968geometric, wyner:1968, lapidoth1997role, zhou2017refined}. 
 
\subsection{Background and Related Works}
Spherical codes have multiple theoretical and practical uses in numerous fields \cite{delsarte1991spherical}. In the context of information theory, Sakrison \cite{sakrison1968geometric} and Wyner \cite{wyner:1968} provided a geometric understanding of random spherical coding in a Gaussian setting; our main result extends their insights. Specifically, consider the representation of an $n$-dimensional standard Gaussian vector $X$ using $M = \lceil 2^{nR} \rceil$ codewords uniformly distributed over the sphere of radius $r = \sqrt{n (1-2^{-2R})}$. The left side of Fig.~\ref{fig:interpretation}, adapted from \cite{sakrison1968geometric}, shows a conceptual relation between $X$ and its nearest codeword $\hat{X}$: As $n$ increases, the angle $\alpha^*$ between the two concentrates so that $\sin(\alpha^*)$ converges to $2^{-R}$ in probability, 
hence the quantized representation of $X$ and the error $X-\hat{X}$ become orthogonal. Consequently, the MSE between $X$ and its quantized representation, averaged over all random codebooks, converges to the Gaussian distortion-rate function $2^{-2R}$. %
In fact, as noted in \cite{lapidoth1997role}, this Gaussian coding scheme\footnote{We denote this scheme as \emph{Gaussian} since $r$ is chosen according to the distribution attaining the Gaussian DRF \cite[Ch. 10.5]{ThomasCover}. } achieves the Gaussian DRF when $X$ is generated by any ergodic information source of unit variance, implying that the second moments of $X- \hat{X}$ are independent of the distribution of $X$ as the problem dimension $n$ goes to infinity. \par
In this paper, we show that a much stronger statement holds for a properly scaled version of the quantized representation ($Y$ in Fig.~\ref{fig:interpretation}): in the limit of high dimension, the distribution of $Y-X$ is independent of the distribution of $X$ and is approximately Gaussian. 
This property of $Y-X$ suggests that the underlying quantities of interest (e.g.,  the parameter vector $\theta$ or the sequence $\{U_n\}_{n=1}^\infty$) can now be estimated as if $X$ is observed under additive Gaussian noise. This paper formalizes this intuition by showing that estimators from the Gaussian-noise corrupted version of $X$ ($Z$ in Fig.~\ref{fig:interpretation}) attain similar performances if applied to the scaled representation $Y$. 

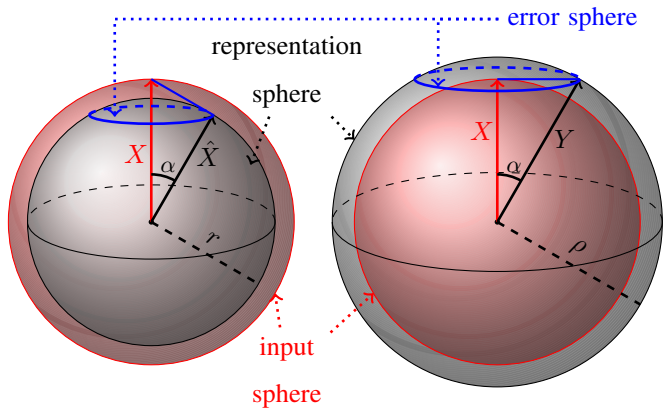
\begin{figure}
    \centering
    \begin{tikzpicture}
    \def\x0{-4.6}
    \def\s{1.9}
    \def\r{\s * 0.866}
  \shade[ball color = red!40, opacity = 0.4] (\x0,0) circle (\s);
  \shade[ball color = gray!40, opacity = 0.4] (\x0,0) circle (\r);
  \draw[color = red] (\x0,0) circle (\s);
  \draw (\x0,0) circle (\r);
  \draw (\x0-\r,0) arc (180:360:{\r} and 0.3*\r);
  \draw[dashed] (\x0 + \r,0) arc (0:180:{\r} and 0.3*\r);
  \fill[fill=black] (\x0,0) circle (1pt);
  \draw[dashed, line width = 1pt] (\x0,0) -- node[above, rotate = -30, xshift = -0.0cm]{$r$} ({\x0 + \r * 0.866},-{\r*0.5});
  
  \draw[color = red,line width = 1pt,->] (\x0,0) -- node[above, xshift = -0.2cm, yshift = -0.3cm]{$X$} (\x0,\s);
  \def\cosx{0.866}
  \def\sinx{0.5}
  \draw[color = black,line width = 1pt,->] (\x0,0) -- node[above, xshift = 0.35cm, yshift = -0.1cm]{$\hat{X}$} ({\x0 +\r*\sinx},{\r*\cosx});
  \draw[color = blue,thick] (\x0,\s) --  ({\x0 + \r*\sinx},{\r*\cosx});
  
  \draw[color = blue, line width = 1pt] (\x0+\r*\sinx,\r*\cosx) arc (0:-180:{\r*\sinx} and \r/14);
  \draw[color = blue, dashed, line width = 1pt] (\x0-\r*\sinx,\r*\cosx) arc (180:0:{\r*\sinx} and \r/14);
  
  
  \node at (\x0,-\s*9/8) {};
  \draw[color = black, line width = 1pt] (\x0,\s/3) node[right, yshift = 0.1cm, scale =.9] { $\alpha$} arc (90:60:{\s/3});
  
    \def\s{1.9}
    \def\r{\s * 1/0.866}
  \shade[ball color = gray!40, opacity = 0.4] (0,0) circle (\r);
  \shade[ball color = red!40, opacity = 0.4] (0,0) circle (\s);
  \draw[color = red] (0,0) circle (\s);
  \draw (0,0) circle (\r);
  \draw (-\r,0) arc (180:360:{\r} and 0.3*\r);
  \draw[dashed] (\r,0) arc (0:180:{\r} and 0.3*\r);
  \fill[fill=black] (0,0) circle (1pt);
 
 \draw[dashed, line width = 1pt] (0,0) -- node[above, rotate = -30, xshift = -0.0cm]{$\rho$} ({\r * 0.866},-{\r*0.5});
  
  \draw[color = red,line width = 1pt,->] (0,0) -- node[above, xshift = -0.2cm]{$X$} (0,\s);
  \def\cosx{0.866}
  \def\sinx{0.5}
  \draw[color = black, line width = 1pt,->] (0,0) -- node[above, xshift = 0.35cm, yshift = -0.1cm]{$Y$} ({\r*\sinx},{\r*\cosx});
  \draw[color = blue,thick] (0,\s) --  ({\r*\sinx},{\r*\cosx});
  
  \draw[color = blue, line width = 1pt] (\r*\sinx,\r*\cosx) arc (0:-180:{\r*\sinx} and \r/14);
  \draw[color = blue, dashed, line width = 1pt] (-\r*\sinx,\r*\cosx) arc (180:0:{\r*\sinx} and \r/14);
  
  \node[align = center] at (-2.8,2) (rep) {representation \\ sphere};
  \node[align = center, color = red] at (-2.8,-2) (in) {input \\ sphere};
  
  \draw[->, dotted,  line width = 1pt] (rep) -- (-\r*\cosx,\r*\sinx);
  \draw[->, dotted, line width = 1pt] (rep) -- (\x0+\r*\cosx*25/36,\r*\sinx*25/36);
  
  \draw[->, color= red, dotted, line width = 1pt] (in) -- (-\s*\cosx,-\s*\sinx);
  \draw[->, color= red, dotted, line width = 1pt] (in) -- (\x0+\s*\cosx,-\s*\sinx);
  
  \draw[color = black, line width = 1pt] (0,\s/3) node[right, yshift = 0.1cm, scale =.9] {$\alpha$} arc (90:60:{\s/3});
  
  \node[align = center, color = blue] at (1,2.7) (err) {error  sphere};
 
  \draw[color = blue, ->, dotted,line width = 1pt] (err.west) -| +(\x0-0.5,-1.3);
  \draw[color = blue, ->, dotted,line width = 1pt] (err.west) -| +(-0.8,-0.9);
  \end{tikzpicture}
   \caption{Conceptual 3D illustration of random spherical coding in high-dimension. The norm of the input vector $X$ concentrates around the input sphere. 
   Left Sphere: Geometric interpretation of standard source coding from \cite{sakrison1968geometric,wyner:1968}. The representation sphere is chosen such that the error vector $\hat{X} - X$ is orthogonal to the reconstruction vector $\hat{X}$. Right Sphere: Geometric description of the quantization error considered in this paper. The representation sphere is chosen such that  $Y-X$ is orthogonal to $X$.
    }
    \label{fig:interpretation}
\end{figure}

In general, the radius of the codebook under which the distribution of $Y$ and $Z$ are close depends on the magnitude of $X$. 
This magnitude is only needed at the decoder, while the encoder can represent its input $X$ using codewords living, say, on the unit sphere. In particular, such an encoder is agnostic to the relationship between $X$ and $\theta$. This situation is in contrast to optimal quantization schemes in indirect source coding \cite{DobrushinTsybakov,berger1971rate} and in problems involving estimation from compressed data \cite{han1987hypothesis, zhang1988estimation, HanAmari1998, duchi2014optimality, han2018distributed, KipnisDuchi}, where the specification of the model $\theta \to X$ is crucial for designing the compression and estimation schemes. As a result, the random spherical coding scheme we rely on is sub-optimal in general, although it can be applied in situations where the model $\theta \to X$ is unknown at the compressor. Coding schemes with similar properties were studied in the context of indirect source coding under the name \emph{compress-and-estimate} in \cite{kipnis2021rate, KipnisWiener2019}.\par 
The equivalence between quantization noise and AWGN we provide in this paper is given in terms of the Wasserstein distance between the distributions of these vectors. We refer to \cite{rachev1998mass, ambrosio2003optimal, villani2008optimal} for properties, applications, and the long list of alternative names of the Wasserstein distance. 
In the context of information theory, the Wasserstein distance has been used to establish consistency of some quantization procedures
\cite{pollard1982quantization, linder2002lagrangian} and to define a class of channels over which communication is possible without assuming synchronization \cite{1056045, gray1980block}. One of the core results of this paper is a novel coupling of the distributions of $Y$ and $Z$ given $X$, leading to a bound on the Wasserstein distance between them. This bound, in combination with the fact that the $L_p$ risk of a Lipschitz estimator is continuous with respect to the Wasserstein distance, implies that the risk of such an estimator, when used at the output of a random spherical code, converges to the risk when used at the output of a Gaussian channel. 



Our work is similar in spirit to the work of Zamir and Feder \cite[Section~III]{ZamirFeder1996}, who provide a Gaussian approximation for the quantization error of a random dithered lattice quantizer. In the setting of their paper, the quantization error is independent of the input and distributed uniformly over the basic cell of the lattice. They show that as the dimension $n$ increases,  there exists a sequence of lattice quantizers such that the relative entropy between the distribution of the quantization noise and the isotropic Gaussian distribution with matched power is bounded from above by $c \cdot \log (n)$ where $c$ is a positive constant.  Normalizing by $n$, they conclude that the relative entropy per dimension converges to zero in the large-$n$ limit. To interpret their results in the setting of this paper, we can use the Gaussian transportation inequality \cite{RaginskySason2014}, which leads to an upper bound on the $2$-Wasserstein distance that is order $\sigma \sqrt{\log n}$ where $\sigma^2$ is the variance of the additive noise. By contrast, for quantization using a random spherical code, our results provide an upper bound on $p$-Wasserstein distance that is order $\sigma p$. Namely, both bounds are proportional to $\sigma$ but the bound following from \cite{ZamirFeder1996} is unbounded in the problem dimension.

Another related setting is the problem of channel simulation, the goal of which is to design a random code that induces a particular target distribution between the data and the compressed representation \cite{harsha2007communication, cuff2013distributed, LiElGamal2018}. \par
The rest of this paper is organized as follows. In Section~\ref{sec:main} we provide our main results on the distributional connection between spherical coding and AWGN. In Sections \ref{sec:parameter_estimation} and \ref{sec:source_coding} we apply these results to parameter estimation and source coding, respectively.  Section~\ref{sec:main_proof} provides the proofs of the main results. Concluding remark are provided in Section~\ref{sec:conclusions}. 

\section{Main Results \label{sec:main}}
The main result of this paper is a comparison between the quantization error under random spherical coding and independent Gaussian noise. 

\begin{definition}[Random Spherical Code] 
An $(n,M)$ random spherical code is a collection of $M$ codewords $\cC = \{C(1), \dots, C(M)\}$ drawn independently from the uniform distribution on the unit sphere in $\reals^n$. The encoder maps an input vector $x \in \reals^n \setminus \{0\}$ to the index $i^* \in \{1,\dots, M\}$ of a codeword that maximizes the cosine similarity
\begin{align}
    i^* \in  \arg \max_{1\leq i \leq M}  \langle C(i) ,x\rangle.
\end{align}
Given the index $i$ and knowledge of the codebook $\Ccal$, the decoder outputs the compressed representation 
\begin{align}
    Y \coloneqq \rho\, C(i^*),
\end{align}
where $\rho \ge 0$ is a scaling parameter. 
\end{definition}

Our results focus on the distribution of the compressed representation $Y$ induced by the randomness in the codebook. Note that this distribution is parameterized by the input $x$ and has a density with respect to the surface measure on the sphere of radius $\rho$. 
We also define the maximal cosine similarity according to
\begin{align}
    \label{eq:alpha_star}
\cos(\alpha^*) \coloneqq \frac{\langle x, C(i^*) \rangle}{\|x\|},\qquad \alpha^* \in [0, \pi].
\end{align}
It is well known (see e.g.,\cite{stam:1982}) that the distribution of  $\alpha^*$ does not depend on $x$ and is given by
\begin{align}
\label{eq:alpha_CDF}
    \pr{ \left(\cos(\alpha^*) \le s  \right)}  = \left(1-Q_n(s) \right)^M,
\end{align}
where 
\begin{align}
\label{eq:Qn}
Q_n(s) \coloneqq
\frac{\Gamma(\frac{n+1}{2})}{\sqrt{\pi}  \Gamma(\frac{n}{2})}
\int_s^1 (1 - t^2)^{\frac{n-3}{2}} \, dt,
\end{align}
and where $\Gamma(z) = \int_0^\infty x^{z-1} \exp ( - z) \, dz$ is the Gamma function.

\begin{figure}
    \begin{center}
\usetikzlibrary{decorations.text}
\begin{tikzpicture}
\def\s{3};
\def\per{.3};
\def\ang{70};
\def\angr{0};
\def\r{(\s) / sin(\ang)};

\node (x) at (0,{\s}) {};
\node (xrho) at (0,{\r * sin(\ang)}) {};
\node (Y) at ({\r*cos(\ang)},{\r*sin(\ang)}) {};
\node (rho) at ({\r*cos(\angr)},{\r*sin(\angr)}) {};

\draw[decoration={text along path,
      text={representation sphere}, text align={center}},decorate, yshift=.15cm] (0,{\r}) arc (90:30:{\r});

\draw (0,{\r}) arc (90:0:{\r});

\draw[-|,dashed, line width = 1pt] (0,0) -- node[above, rotate = 0, xshift = -0.0cm]{$\rho$} ({\r*cos(\angr)},{\r*sin(\angr)});
  
\draw (0,{\s/3}) arc (90:70:{\s/3}) node[above,xshift=-.1cm] {\small $\alpha^*$}; 
  
 \draw[color = blue,thick] (x.center) --  (Y.center);
  
\draw[color = red,line width = 1pt,->] (0,0) -- node[above, xshift = -0.2cm]{$x$} (0,\s);

\draw[|-|,color = gray,line width = .5pt,dashed] ({\r * cos(\ang)},0) -- node[right] {\scriptsize $\rho \cos(\alpha^*)$} (Y.center);

\draw[-|,color = gray,line width = .5pt] (0,0) -- node[below] {\scriptsize $\rho \sin(\alpha^*)$} +({\r * cos(\ang)},0);

\draw[color = black, line width = 1pt,->] (0,0) -- node[above, xshift = 0.35cm, yshift = -0.1cm]{$Y$} (Y.center);
\end{tikzpicture}
\begin{tikzpicture}

\def\s{3};
\def\per{.3};
\def\ang{65};
\def\angr{0};
\def\r{(\s+\per) / sin(\ang)};

\node (x) at (0,{\s}) {};
\node (xrho) at (0,{\r * sin(\ang)}) {};
\node (Z) at ({\r*cos(\ang)},{\r*sin(\ang)}) {};
\node (rho) at ({\r*cos(\angr)},{\r*sin(\angr)}) {};

  
\draw[color = blue, thick] (x.center) -- node[above, yshift=0cm] {\small $\sigma W$}  (Z.center);
  
  
\draw[color = red,line width = 1pt,->] (0,0) -- node[above, xshift = -0.2cm]{$x$} (0,\s);

\draw[|->,color = gray,line width = .75pt] (x.center) -- node[left] {\scriptsize $\sigma A$} (xrho.center);

\draw[->,color = gray,line width = .75pt] (0,0) -- node[below] {\scriptsize $\sigma B$} +({\r * cos(\ang)},0);

\draw[dashed,color = gray,line width = .5pt] ({\r * cos(\ang)},{\s}) -- ({\r * cos(\ang)},0);

\draw[color = black, line width = 1pt,->] (0,0) -- node[above, xshift = 0.35cm, yshift = -0.1cm]{$Z$} (Z.center);
\end{tikzpicture}
    \end{center}
    \caption{
    Conceptual 2D description of the coupling in Theorem~\ref{thm:coupling} when the representation sphere is matched to $\|x\|$. 
    The quantization error $Y-x$ (Left) is compared to the standard $n$-dimensional Gaussian noise vector $W$ (Right). The random variable $Y$ is the nearest codeword to $x$ in the random codebook ensemble. 
    The standard Gaussian random variable $A$ is the normalized component of $W$ in the direction of $x$. The random variable $B$ describes the magnitude of the projection of $W$ onto the $(n-1)$-dimensional space orthogonal to $x$. 
    }
    \label{fig:coupling}
\end{figure}
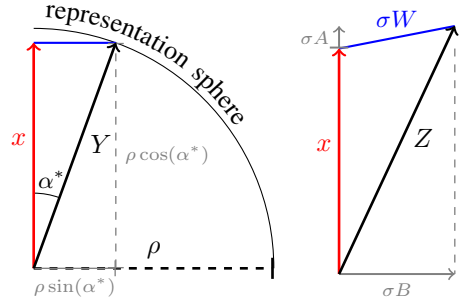

\subsection{Approximation using AWGN}
{
The fundamental question we address is the extent to which the quantization error $Y - x$ can be approximated by an isotropic zero-mean Gaussian noise. To answer this question we introduce the AWGN-corrupted observation model
\begin{align}
\label{eq:Z_channel_def}
Z = x + \sigma W, \qquad W \sim \normal(0, I_n).
\end{align}
Our main results are based on a coupling argument. Specifically, we show that there exists a joint distribution on the pair $(Y,Z)$ under which the distribution of $\|Y-Z\|$ can be described exactly in terms of the tuple $(\rho, \sigma, n,M)$ and the magnitude of the input. The proof of the following result is in Section~\ref{sec:proof:coupling}.

\begin{thm}
\label{thm:coupling}
For any $x\in \reals^n \backslash\{0\}$,  positive integer $M$, and real numbers $\rho,\sigma > 0$, there exists a joint distribution on $(Y,Z)$ such that $Y$ has the distribution of the compressed representation of magnitude $\rho$ obtained from an $(n,M)$ random spherical code with input $x$, $Z \sim \Ncal(x, \sigma^2)$, and 
\begin{align}
    \label{eq:coupling}
\left\| Y- Z \right\|^2 = \left(\rho \cos(\alpha^*) - \|x\| - \sigma A \right)^2 + \left( \rho \sin(\alpha^*) - \sigma B \right)^2,
\end{align}
where:
\begin{itemize}
    \item $A,B,\alpha^{*}$ are independent,
    \item $A\sim \Ncal(0,1)$,
    \item $B$ has a chi distribution with $n-1$ degrees of freedom.
    \end{itemize}
\end{thm}

Figure~\ref{fig:coupling} provides a conceptual illustration of $\alpha^*$, $A$, and $B$ in the comparison between $Y$ and $Z$ provided in Theorem~\ref{thm:coupling}. The proof of this theorem in  Section~\ref{sec:main_proof} provides the exact description of these random variables and vectors. 

{
The coupling described in Theorem~\ref{thm:coupling} holds for any choice of the parameters $(\rho, \sigma)$. The next step is to show that the term $\|Y-Z\|$ in \eqref{eq:coupling} is negligible compared to the magnitude of the quantization error under a proper specification of these parameters. To provide a sense of scale, observe that the error due to AWGN satisfies  $\|Z - x\| = \sigma \|W\|$ where $\|W\|$ concentrates about $\sqrt{n}$ in  the large-$n$ limit. For comparison, we consider the upper bound given by
\begin{align}
\left\| Y- Z \right\| \le |\|x\| - \gamma| +  \sigma \Delta \label{eq:coupling_UB}
\end{align}
where $\gamma$ is an estimate of $\|x\|$ and the normalized error term 
\begin{align}
\label{eq:Delta}
    \Delta \coloneqq \frac{1}{\sigma} \sqrt{  \left(\rho \cos(\alpha^*) - \gamma - \sigma A \right)^2 + \left( \rho \sin(\alpha^*) - \sigma B \right)^2} 
\end{align}
does not depend on $x$. In the following we show that $(\rho, \sigma)$ can be chosen as a function of $(n,M, \gamma)$ such that the distribution of $\Delta$ is bounded independently of the dimension $n$. 
}

First we consider the setting where the number of codewords is given by $M = \lceil 2^{nR} \rceil$ for a fixed bitrate $R>0$. For a given $\gamma \ge 0$ we use the specification 
\begin{align}
\label{eq:rhosig}
    \rho = \frac{ \gamma }{ \sqrt{ 1 - 2^{-2R}}}, \qquad \sigma = \frac{\gamma}{ \sqrt{n} \sqrt{ 2^{2R}-1}}.
\end{align}

 
\begin{thm}
\label{thm:DeltaR}
Suppose that $M = \lceil 2^{nR} \rceil$ for a fixed bitrate $R>0$. Under the specification given in  \eqref{eq:rhosig},
the normalized error term $\Delta$ defined in \eqref{eq:Delta} has a sub-Gaussian distribution with parameters that depend only on the bitrate $R$. In particular, there exists a positive number $C_R$ such that
\begin{align}
\label{eq:thm:Delta}
    \ex{ \Delta^p}^{1/p} \le C_R \sqrt{p}, \qquad p \ge 1. 
\end{align}

\end{thm}

The significance of Theorem~\ref{thm:DeltaR} is that the distribution of $\Delta$ is bounded uniformly with respect to $n$ and thus the term $\sigma \Delta$ in \eqref{eq:coupling_UB} is order one. By comparison, the magnitude of the additive noise $\sigma \|W\|$  scales at rate $n^{1/2}$. This means that if the estimate of $\|x\|$ is accurate in the sense that $\|x\|/\gamma \to 1$ as $n \to \infty$, then the relative difference between the mismatch  $\|Y-Z\|$ and the 
quantization error $\|Y-x\|$ converges to zero.

%

{
Next, we provide a result that holds in the high-rate setting where $\frac{1}{n} \log M$ diverges. This regime requires a more precise estimate of the max-cosine similarity and we use the specification
\begin{align}
\label{eq:rhosig_tilde}
  \tilde{\rho} = \frac{ \gamma }{ \sqrt{ 1 - 2^{-2\tilde{R}}}},~~ \tilde{\sigma} = \frac{\gamma}{ \sqrt{n} \sqrt{ 2^{2\tilde{R}}-1}}, ~~ \tilde{R} = \frac{1}{n-1} \log_2 M.
\end{align}
Note that $\tilde{R}$ is normalized by $(n-1)$ instead of $n$. Further, we will  require that the number of codewords is bounded from below by
\begin{align}
\label{eq:thm:Delta_tilde}
     M_\beta(n)\coloneqq \sqrt{n} (\csc \beta)^{n-1},
\end{align}
for some fixed constant $\beta \in (0, \pi/2)$. 

\begin{thm}\label{thm:Delta_beta}
Suppose that $M \ge M_\beta(n)$ for some  constant $\beta \in (0, \pi/2)$. Under the specification given in \eqref{eq:rhosig_tilde}, the normalized error term $\Delta$ defined in \eqref{eq:Delta} has a sub-exponential distribution with parameters that depend only on $\beta$. In particular, there exists a positive number $C_\beta$  such that
\begin{align}
    \ex{ \Delta^p}^{1/p} \le C_\beta \,  p, \qquad p \ge 1. 
\end{align}
\end{thm}

 Theorem~\ref{thm:Delta_beta} is stronger than  Theorem~\ref{thm:DeltaR} in the sense that the bound holds uniformly for all $(n,M)$ satisfying the constraint $M\ge M_{\beta}(n)$. This is important for the high-bitrate setting where $\sigma$ converges to zero. The price that is paid is that the sub-Gaussian tail condition is replaced with the weaker sub-exponential condition. An explicit value for the constant $C_\beta$ and its dependence on $\beta$ can be found in the proof of Theorem~\ref{thm:Delta_beta}, which is given in Appendix~\ref{sec:Proof2_3}.
}
}
\subsection{Bounds on Wasserstein Distance} 
Our results can also be stated in terms of the Wasserstein distance on distributions. The $p$-Wasserstein distance between distributions $P$ and $Q$ on $\reals^n$ is defined by
\begin{align*}
W_p(P,Q) \coloneqq \inf \left({\ex{ \| U - V\|^p}} \right)^{1/p},
\end{align*}
where the infimum is over all joint distributions on $(U,V)$ satisfying the marginal constraints $U \sim P$ and $V\sim Q$. For $p \geq 1$, the $p$-Wasserstein distance is a metric on the space of distributions with finite $p$-th moments. \par
Theorems~\ref{thm:DeltaR} and \ref{thm:Delta_beta} imply upper bounds on the Wasserstein distance between the distribution of the compressed representation obtained using a random spherical code and the distribution of the AWGN-corrupted version of the input. 
\begin{thm}\label{thm:Wpbound}
Let $X$ be a random vector in $\reals^n$ with $\ex{ \|X\|^p} < \infty$ for some $p \ge 1$. Let $P_Y$ be the distribution of the compressed representation of magnitude $\rho$ obtained from an $(n,M)$ random spherical code with input $X$ 
and let $P_Z$ be the distribution of $Z = X + \sigma W$ where $W$ is an independent standard Gaussian vector.  
\begin{itemize}
    \item [(i)] If $(\rho,\sigma)$ are defined as in \eqref{eq:rhosig}, then
\begin{align}
\label{eq:YZ_p_bound}
W_p(P_Y, P_Z) \le \left( \ex{ \left| \|X\| - \gamma \right|^p} \right)^{1/p} + \sigma C_R \sqrt{p},
\end{align} 
where $C_R$ is a positive number that depends only on $R$.
\item [(ii)] If $M \ge M_n(\beta)$ for some $\beta \in  (0, \pi/2)$ and $(\rho,\sigma)$ are defined as in \eqref{eq:rhosig_tilde}, then
\begin{align}
\label{eq:YZ_p_bound_tilde}
W_p(P_Y, P_Z) \le \left( \ex{ \left| \|X\| - \gamma \right|^p} \right)^{1/p} + \sigma  C_\beta p,
\end{align} 
where $C_\beta$ is a positive number that depends only on $\beta$.
\end{itemize}
\end{thm}
\begin{proof}
The $p$-th power of the Wasserstein distance is convex in the pair $(P,Q)$ \cite[Theorem~4.8]{villani2008optimal}, and thus
\begin{align}
    W^p_p(P_Y,P_Z)& \le  \int W^p_p(P_{Y\mid X=x},P_{Z \mid X = x}) \, \dd P_{X}(x),
\end{align}
where $P_{Y\mid X=x}$ and $P_{Z \mid X =x}$ denote the conditional distributions of $Y$ and $Z$, respectively. In view of \eqref{eq:coupling_UB} and \eqref{eq:thm:Delta}, it follows that
\begin{align}
 W_p(P_{Y\mid X=x},P_{Z \mid X = x}) \le \left| \|x\| - \gamma \right|  + \sigma C_R \sqrt{p}.
\end{align}
Combining these displays with Minkowski's inequality leads to \eqref{eq:YZ_p_bound}. Inequality~\eqref{eq:YZ_p_bound_tilde} is obtained in a similar manner from \eqref{eq:thm:Delta_tilde}.
\end{proof}

A useful property of the Wasserstein distance is that it controls the expectations of Lipschitz continuous functions. Recall that a mapping $f: \reals^n \to \reals^m $ is Lipschitz continuous if there exists a constant $L$ such that 
\begin{align}
   \|f(u)- f(v)\|  \le L \|u - v\|, \quad \text{for all $u,v \in \reals^n$.}
\end{align}
 The infimum over all $L$ is call the Lipschitz constant and is denoted by by $\|f\|_\mathrm{Lip}$. The 1-Wasserstein distance, which is also known as the Kantorovich–Rubinstein distance,  can be expressed equivalently as
 \begin{align}
W_1(P,Q) = \sup\left\{ \bEx\left[f(U) \right] - \bEx\left[ f(V) \right] \, \mid \, \| f\|_\mathrm{Lip} \le 1 \right\},
 \end{align}
 where $U \sim P$ and $V \sim Q$.  More generally, the $p$-Wasserstein can be used to bound the difference between  $p$-th moments. 
 
\begin{prop}\label{prop:Wp_cont}
Let $U \sim P$ and $V \sim Q$ be random vectors on $\reals^n$. For any Lipschitz function $f : \reals^n \to \reals^m$,
\begin{align}
\left| \left( \mathbb{E} \left[ \|f(U)\|^p\right]\right)^{1/p} - \left( \mathbb{E} \left[ \|f(V)\|^p\right]\right)^{1/p}\right| \le   \|f\|_\mathrm{Lip}\, W_p(P,Q), \label{eq:W_cont}
\end{align}
provided that the expectations exist.
\end{prop}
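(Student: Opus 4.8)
The plan is to prove \eqref{eq:W_cont} by first establishing the pointwise (coupling-level) inequality and then optimizing over couplings. The key observation is that for a Lipschitz map $f$, the triangle inequality in $L^p$ combines cleanly with the Lipschitz bound. Concretely, I would start from an arbitrary coupling $(U,V)$ with $U\sim P$ and $V\sim Q$, and apply the triangle inequality for the $L^p$-norm of real-valued random variables (Minkowski's inequality) to the random variables $\|f(U)\|$ and $\|f(V)\|$. This gives
\begin{align*}
\left| \left( \mathbb{E}\left[\|f(U)\|^p\right]\right)^{1/p} - \left( \mathbb{E}\left[\|f(V)\|^p\right]\right)^{1/p} \right| \le \left( \mathbb{E}\left[ \, \big|\, \|f(U)\| - \|f(V)\| \,\big|^p \,\right]\right)^{1/p}.
\end{align*}

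Next I would bound the integrand pointwise: by the reverse triangle inequality in $\reals^m$, $\big| \|f(U)\| - \|f(V)\| \big| \le \|f(U) - f(V)\|$, and then by the Lipschitz property, $\|f(U)-f(V)\| \le \|f\|_\mathrm{Lip} \, \|U - V\|$. Substituting and pulling the constant out of the expectation yields
\begin{align*}
\left| \left( \mathbb{E}\left[\|f(U)\|^p\right]\right)^{1/p} - \left( \mathbb{E}\left[\|f(V)\|^p\right]\right)^{1/p} \right| \le \|f\|_\mathrm{Lip} \left( \mathbb{E}\left[ \|U - V\|^p\right]\right)^{1/p}.
\end{align*}
Since this holds for every coupling of $P$ and $Q$, taking the infimum over all such couplings on the right-hand side gives exactly $\|f\|_\mathrm{Lip}\, W_p(P,Q)$, which is the claimed bound.

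There is essentially no serious obstacle here — the statement is a standard continuity property of moments under the Wasserstein metric — but a couple of technical points deserve care. First, one must ensure the expectations and the infimum are well defined: the hypothesis that the relevant expectations exist handles the left-hand side, and if $W_p(P,Q) = \infty$ the inequality is trivial, so we may assume it is finite and work with couplings for which $\mathbb{E}\|U-V\|^p < \infty$. Second, Minkowski's inequality in the first display is applied to the scalar random variables $\|f(U)\|, \|f(V)\| \in L^p$; one should note that $\|f(U)\|$ and $\|f(V)\|$ each lie in $L^p$ precisely because the stated expectations exist, so the triangle inequality for the $L^p$ norm applies. The only mild subtlety is that $m$ may differ from $n$, but this is immaterial: the reverse triangle inequality $\big|\,\|a\| - \|b\|\,\big| \le \|a - b\|$ holds in any normed space, in particular in $\reals^m$. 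I expect the write-up to be three or four lines of display math with a sentence of justification for each step.
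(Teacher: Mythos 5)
Your proposal is correct and follows essentially the same route as the paper's proof: Minkowski's inequality combined with the Lipschitz bound at the level of an arbitrary coupling, followed by taking the infimum over couplings (the paper proves the two one-sided bounds by symmetry rather than invoking the reverse triangle inequality in $L^p$, but this is the same argument reorganized). No gaps.
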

\begin{proof}
For any coupling of $(U,V)$, Minkowski's inequality and the Lipschitz assumption on $f$ yield
\begin{align*}
\left(\ex{ \|  f(U)\|^p} \right)^{1/p} 
& \le \medmath{ \left(\ex{ \|  f(V)\|^p} \right)^{1/p} + \left(\ex{ \| f(U) - f(V)\|^p}\right)^{1/p}}\\
&\le \medmath{\left(\ex{ \|  f(V)\|^p} \right)^{1/p} + \|f\|_\mathrm{Lip} \left(\ex{ \|U - V\|^p}\right)^{1/p}}.
\end{align*}
Taking the infimum over all possible couplings leads to one side of the inequality.  Interchanging the role of $U$ and $V$ and repeating the same steps gives the other side.
\end{proof}


\subsection{Concentration of the Norm}
\label{sec:concentration}

The bounds in Theorems~\ref{thm:DeltaR} and \ref{thm:Delta_beta} simplify further when the parameter $\gamma$ in \eqref{eq:Delta} is matched to the magnitude of the input. %
As a specific example, suppose that the data is known to lie on the sphere of radius $\sqrt{n}$ and that $M=\lceil2^{nR} \rceil$. By setting $\gamma = \sqrt{n}$, we see that there exists a coupling of $Y$ and $Z$ under which the quantization error $\|Y-Z\|$ is bounded by a sub-Gaussian random variables that is independent of $n$. \par
For many applications, however, the assumption that the data lay on a sphere of known radius is too restrictive. Therefore, in this paper, we assume that the data at the input to the compressor is a random vector $X$ in $\reals^n$ whose magnitude $\|X\|$ concentrates about a known value $\gamma$. 
This assumption is reasonable for high-dimensional settings where the entries of $X$ are weakly correlated. More generally, there are a number of other approaches that can be used to deal with the fact that $\|X\|$ is unknown. One approach is to use additional bits to encode the magnitude of $X$, as is done in \cite{zhu2014quantized}. For example, if $\|X\| \le  \kappa \sqrt{n}$ almost surely where $\kappa$ is a known constant, then $\log_2 \sqrt{n}$ bits are sufficient to encode $\|X\|$ with absolute error less than $\kappa$, such that 
\[
\left( \ex{ \left| \|X\| - \gamma \right|^p} \right)^{1/p} \leq \kappa.
\]
When $n$ is large, the logarithmic number of bits used to encode the magnitude of $X$ is negligible compared to the $n R$ bits used to encode its direction. An alternative approach is to compare the compressed representation with a noisy version of $X$ after it has been projected onto the unit sphere in $\reals^n$. This can be achieved, by setting $\gamma = 1$ and redefining the input to be $\tilde{X} =  X/\|X\|$ such that the magnitude is equal to one almost surely. 
In both of the approaches described above, the noise variance $\sigma^2$ is scaled in such a way that the signal-to-noise ratio in the AWGN observation model \eqref{eq:Z_channel_def} depends only on $(n,R)$ and is given by
$(\rho/\sigma)^2$. One may also consider a variable-length coding strategy that adapts the number of bits to the magnitude of $X$ such that the effective noise power is constant and the signal-to-noise ratio is proportional to $\|X\|^2$. We leave this as a direction for future work.

\section{Application to Parameter Estimation \label{sec:parameter_estimation}} 

In this section, we apply our main results to the problem of estimating an unknown parameter vector $\theta$ from a compressed representation of the data $X$. 
For each integer $n$, let $\cP_n = \{P_{n,\theta}\, : \, \theta \in \Theta_{n}\}$ be a family of distributions on $\reals^n$ with index set $\Theta_n \subseteq \reals^{d_n}$. For the purposes of exposition we will focus on the squared error loss. Our approach is quite general, however, and can be extended to other loss functions. \par
An important performance benchmark in estimating $\theta$ from a bitrate-$R$ compressed representation of $X$ is the minimax MSE:
\begin{align}
\cM_{n}^*  \coloneqq
\inf_{\phi,\psi}  \sup_{\theta \in \Theta_n} \frac{1}{d_n}  \bEx_{P_{n,\theta}}\left[  \| \theta  - \psi(\phi(X))\|^2 \right] ,
\end{align}
where the minimum is over all encoding functions $\phi : \reals^n \to \{1, \dots, M\}$ and decoding functions $\psi: \{1, \dots, M\} \to \reals^d$ with $M = \lceil 2^{nR} \rceil$. 

Zhu and Lafferty~\cite{zhu2014quantized} studied the asymptotic minimax MSE for the Gaussian location model $X \sim \Ncal(\theta, \eps^2 I_n)$ with $\Theta_n$ the $n$-dimensional Euclidean ball of radius $\kappa \sqrt{n}$, and showed that
\begin{align}
\limsup_{n \to \infty} \cM_n^* = \frac{ \kappa^2 \eps^2}{\kappa^2 + \eps^2} +\frac{ \kappa^4}{ \eps^2 + \kappa^2} 2^{-2R}.
\label{eq:linear_minimax_risk}
\end{align}
Their achievability result is based on random spherical coding while devoting a number of bits sublinear in $n$ to encode the magnitude of the $X$, as discussed in Section~\ref{sec:concentration} above.
\par
The comparison between quantization error and Gaussian noise in Theorem~\ref{thm:Wpbound} provides a straightforward method for obtaining non-asymptotic upper bounds on the minimax MSE that can be applied to a large class of models. The basic idea is to study the MSE of Lipschitz estimators applied to the AWGN-corrupted data. We use the following assumption, which says that $P_{n,\theta}$ concentrates  on a spherical shell whose radius does not depend on $\theta$. 

\begin{assumption}[Concentration of Magnitude]\label{assumption:near_sphere} 
There exists a sequence of positive numbers $\{ (\gamma_n, \tau_n)\}_{n \in \mathbb{N}}$ such that 
\begin{align}
\sup_{\theta \in \Theta_n}\bEx_{ P_{n,\theta}} \left[\left| \|X\| - \gamma_n \right|^2 \right] \le  \tau^2_n
\end{align}
\end{assumption}
Assumption~\ref{assumption:near_sphere} provides a way to formulate many cases of interest in terms of the radius of the shell $\gamma_n$ and its width $\tau_n$. \par
The next result uses this assumption to bound the difference in root MSE between an estimator applied to the compressed representation $Y$ and the same estimator applied to the AWGN-corrupted version $Z$. 
\begin{thm}\label{thm:parameter_estimation}
Let $\{P_{n,\theta} \}_{n \in \mathbb{N}}$ be  a sequence of models that satisfies Assumption~\ref{assumption:near_sphere}. Given $X \sim P_{n,\theta}$, let $Y$ be the output of an $(n,\lceil2^{n R} \rceil)$ random spherical code with input $X$ and output scaling $\rho$ and let $Z= X + \sigma W$ where $W \sim \normal(0,I_n)$ is independent of $X$ and $\rho$ is according to \eqref{eq:rhosig}. For any Lipschitz estimator $\hat{\theta} : \reals^n \to \reals^{d_n}$, the root MSE satisfies
\begin{align} \label{eq:thm_parameter}
\left| \sqrt{ \mathbb{E}\left[ \| \hat{\theta}(Y) - \theta\|^2\right]} - \sqrt{ \mathbb{E} \left[ \| \hat{\theta}(Z) - \theta\|^2 \right] } \right| 
\le C\,  \| \hat{\theta}\|_\mathrm{Lip}\, \beta_n
\end{align}
where $C$ is a constant that depends only on the bitrate $R$ and $\beta_n = \tau_n \vee (\gamma_n / \sqrt{n})$. Furthermore, for all $t > 0$, the minimax MSE satisfies
\begin{align} 
    \cM^*_n
    \leq 
    \frac{1}{n} \ex{ \| \hat{\theta}(Y) - \theta\|^2}
    & \le (1 + t) 
  \frac{1}{d_n} \ex{  \| \hat{\theta}(Z) - \theta\|^2 }
  \label{eq:thm_parameter2}
  \\
  & \quad +  \left(1 + \frac{1}{t} \right) C^2 \frac{ \| \hat{\theta}\|^2_\mathrm{Lip} \beta^2_n}{d_n}.  
  \nonumber
\end{align}
\end{thm}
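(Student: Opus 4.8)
\emph{Proof proposal.} The plan is to read both displays off of Theorem~\ref{thm:Wpbound} and Proposition~\ref{prop:Wp_cont}, using Assumption~\ref{assumption:near_sphere} only to replace the term $\bEx_{P_{n,\theta}}[\,|\|X\|-\gamma_n|^2\,]^{1/2}$ appearing in Theorem~\ref{thm:Wpbound} by the width $\tau_n$; no genuinely new estimate is required. For \eqref{eq:thm_parameter} I would fix $\theta\in\Theta_n$ and apply Proposition~\ref{prop:Wp_cont} with $p=2$ to the map $f(u)=\hat\theta(u)-\theta$, observing that translation by a fixed vector does not change the Lipschitz constant, so $\|f\|_{\mathrm{Lip}}=\|\hat\theta\|_{\mathrm{Lip}}$; this bounds the quantity inside the supremum in \eqref{eq:thm_parameter} for that $\theta$ by $\|\hat\theta\|_{\mathrm{Lip}}\,W_2(P_Y,P_Z)$. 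I would then invoke Theorem~\ref{thm:Wpbound} with input $X\sim P_{n,\theta}$ and $p=2$ to get $W_2(P_Y,P_Z)\le \bEx_{P_{n,\theta}}[\,|\|X\|-\gamma_n|^2\,]^{1/2}+C'\sqrt2\,\gamma_n/\sqrt n$, bound the first term by $\tau_n$ using Assumption~\ref{assumption:near_sphere}, and absorb $\tau_n+C'\sqrt2\,\gamma_n/\sqrt n$ into $(1+C'\sqrt2)\,\alpha_n$ since $\alpha_n=\tau_n\vee(\gamma_n/\sqrt n)$. Taking the supremum over $\theta$ then yields \eqref{eq:thm_parameter} with a constant depending only on $R$; the supremum is handled uniformly because both the $\gamma_n\sqrt p/\sqrt n$ term of Theorem~\ref{thm:Wpbound} and the bound in Assumption~\ref{assumption:near_sphere} are $\theta$-free resp.\ stated uniformly over $\Theta_n$.

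For \eqref{eq:thm_parameter2} I would first note that the bitrate-$R$ random spherical encoder composed with the decoder $i\mapsto\hat\theta(\gamma_n C(i)/\sqrt{1-2^{-2R}})$ is an admissible $(\phi,\psi)$ for the definition of $\cM^*_n$ — the reconstruction magnitude is a fixed, known constant, so no bits are spent encoding it — whence $\cM^*_n\le\sup_{\theta\in\Theta_n}\frac1{d_n}\bEx[\|\hat\theta(Y)-\theta\|^2]$. Writing $a_\theta=\bEx[\|\hat\theta(Y)-\theta\|^2]^{1/2}$, $b_\theta=\bEx[\|\hat\theta(Z)-\theta\|^2]^{1/2}$ and $\Delta=C\,\|\hat\theta\|_{\mathrm{Lip}}\,\alpha_n$, the already-established \eqref{eq:thm_parameter} gives $a_\theta\le b_\theta+\Delta$; squaring and splitting the cross term with Young's inequality $2b_\theta\Delta\le t\,b_\theta^2+\Delta^2/t$ gives $a_\theta^2\le(1+t)b_\theta^2+(1+1/t)\Delta^2$. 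Dividing by $d_n$ and taking the supremum over $\theta$ — the $\Delta$ term being free of $\theta$ — produces exactly \eqref{eq:thm_parameter2}.

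The hard part will not be an estimate but a bookkeeping point in \eqref{eq:thm_parameter2}: $\cM^*_n$ is an infimum over \emph{deterministic} codes, whereas the spherical scheme is randomized, so the step $\cM^*_n\le\sup_\theta\frac1{d_n}\bEx[\|\hat\theta(Y)-\theta\|^2]$ (with the expectation averaging over the codebook) needs justification. The clean options are either to permit a shared random codebook in the definition of $\cM^*_n$, or to derandomize by exhibiting a single codebook whose worst-case-over-$\theta$ risk does not exceed the codebook-averaged bound — for instance via the probabilistic method after controlling how much the risk fluctuates in the codebook, or through a covering argument over $\Theta_n$. Everything else is either a direct substitution into the earlier results (Step~1) or elementary algebra (Step~2).
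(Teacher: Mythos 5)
Your proof of \eqref{eq:thm_parameter} matches the paper's exactly: fix $\theta$, apply Proposition~\ref{prop:Wp_cont} with $p=2$ to a Lipschitz map whose norm reproduces $\|\hat\theta(\cdot)-\theta\|$, then bound the resulting $W_2(P_Y,P_Z)$ by $C\alpha_n$ via Theorem~\ref{thm:Wpbound} and Assumption~\ref{assumption:near_sphere}. The paper takes the scalar $f(u)=\|\hat\theta(u)-\theta\|$ (composition of $\hat\theta$ with the $1$-Lipschitz map $\|\cdot-\theta\|$) where you take the vector $f(u)=\hat\theta(u)-\theta$; these are interchangeable since both have Lipschitz constant $\|\hat\theta\|_{\mathrm{Lip}}$ and give the same $\|f(\cdot)\|$. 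The step from \eqref{eq:thm_parameter} to \eqref{eq:thm_parameter2} by squaring the root-MSE deviation bound and splitting the cross term with Young's inequality $(a+b)^2\le(1+t)a^2+(1+1/t)b^2$ is also precisely what the paper does.

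The derandomization point you raise at the end is a genuine one, and the paper's own proof of \eqref{eq:thm_parameter2} does not address it: $\cM^*_n$ as defined ranges over deterministic $(\phi,\psi)$, while the random spherical code is randomized, and $\sup_{\theta}\bEx_{\mathrm{codebook}}\left[\mathrm{risk}(\theta)\right]\le B$ does not by itself give $\inf_{\mathrm{codebook}}\sup_{\theta}\mathrm{risk}(\theta)\le B$ — the order of supremum and expectation matters, and a simple two-point example shows the two can differ. Your proposed fixes are the right ones: either read $\cM^*_n$ as permitting a shared random codebook (operationally harmless here, since the codebook is drawn independently of $X$ and can be revealed to both terminals before the data arrive; this is a common convention in quantized estimation), or show that the worst-case-over-$\theta$ risk concentrates as a function of the codebook so that a single good deterministic codebook can be exhibited. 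The paper elides this step entirely, so flagging it is a strength of your proposal rather than a deficiency in it.
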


\begin{proof}
For each $\theta \in \Theta_n$, let $Q_{n,\theta}$ and $Q'_{n,\theta}$ denote the corresponding distributions of $Y$ and $Z$. Proposition~\ref{prop:Wp_cont} evaluated with $f(u) = \| \hat{\theta}(u) - \theta\|$ gives, 
\begin{align*}
 & \left| \left( \mathbb{E}  \left[ \| \hat{\theta}(Y) - \theta\|^2\right]\right)^{1/2} - \left( \mathbb{E} \left[ \| \hat{\theta}(Z) - \theta\|^2 \right] \right)^{1/2} \right| \\
 & \qquad \le  \|\hat{\theta} \|_\mathrm{Lip} W_2(Q_{n,\theta}, Q'_{n,\theta}),
\end{align*}
where we have used the fact that $f$ is the composition of $\hat{\theta}$ with the 1-Lipschitz function $\|\cdot - \theta\|$, and thus  $\|f\|_\mathrm{Lip} = \|\hat{\theta}\|_\mathrm{Lip}$. 
By Theorem~\ref{thm:Wpbound} and  Assumption~\ref{assumption:near_sphere}, the Wasserstein distance is upper bounded  by $\tau_n + \sigma\sqrt{2}C_R$, where $\sigma$ is given in \eqref{eq:rhosig}. It follows that
\begin{align*} 
    \frac{1}{n} \ex{ \| \hat{\theta}(Y) - \theta\|^2}
    & \le 
    \left(
  \frac{1}{d_n} \ex{  \| \hat{\theta}(Z) - \theta\|^2 } + C'  \frac{\gamma}{\sqrt{n}}  \|\hat{\theta}\|_\mathrm{Lip}  
  \right)^2,
\end{align*}
where $C'$ only depends on $R$. The upper bound on the minimax MSE follows from the inequality $(a + b)^2 \le a^2 (1 + t)  + b^2 (1 + 1/t)$ for all $t > 0$.
\end{proof}

One takeaway from Theorem~\ref{thm:parameter_estimation} is that the MSE obtained from the compressed representation is asymptotically equivalent to that of the AWGN-corrupted observation, provided that the Lipschitz constant of the estimator is small enough. 
To gain insight into the interplay between the Lipschitz constant of the estimator, the magnitude of the data, and the typical size of the squared error, it is useful to consider some concrete examples. 

\subsection{Gaussian Location Model} 
For our first example, we consider the Gaussian location model $X \sim \normal(\theta, \eps^2 I_n)$. 
Assume that the parameter set $\Theta_n$ is a subset of the spherical shell:
\begin{align}
 \cS_n \coloneqq   \{ \theta \in \reals^n \, : \,  \kappa \sqrt{n} - \omega_n \le  \|\theta\|  \le \kappa \sqrt{n} + \omega_n \}. 
\end{align}
As an intuition for this notation, one may think about $\sqrt{n}\kappa$ as an estimate for the magnitude of $\theta$ and $\omega_n$ as the uncertainty in this estimate. 
For example, if the entries of $\theta$ are sampled independently from a sub-Gaussian distribution with a second moment $\kappa^2$, then $\|\theta\|-\sqrt{n}\kappa$ is sub-Gaussian \cite[Thm 3.1.1]{vershynin2018high}. 
In this case, there exists a constant $C$ independent of $n$ such that 
$\theta \in \cS_n$ for $\omega_n = C\sqrt{2 \log n}$ with probability at least $1-1/n$.\par
%
The next statement says that under a Gaussian location model, $X$ concentrates whenever $\theta$ is restricted.
\begin{prop} \label{prop:concentration_theta}
Consider the model $X \sim \normal(\theta, \eps^2 I_n)$ with $\Theta_n \subseteq \cS_n$.  Assumption~\ref{assumption:near_sphere} is satisfied with $\gamma_n = \sqrt{n(\kappa^2 + \eps^2)}$ and $\tau_n = \omega_n + 2 \eps$. 
\end{prop}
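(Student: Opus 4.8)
The plan is to reduce the general case $\theta \in \cS_n$ to the case where the mean lies exactly on the sphere of radius $\kappa\sqrt{n}$, and then to control the fluctuation of $\|X\|$ about $\gamma_n$ through the second moment of $\|X\|^2 - \gamma_n^2$.

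First I would fix $\theta \in \Theta_n \subseteq \cS_n$ and pick a ``centred'' mean $\bar\theta$ with $\|\bar\theta\| = \kappa\sqrt{n}$ that realises the distance of $\theta$ to that sphere, i.e.\ $\|\theta - \bar\theta\| = \big| \|\theta\| - \kappa\sqrt{n}\big| \le \omega_n$ --- the radial projection $\bar\theta = (\kappa\sqrt{n}/\|\theta\|)\,\theta$ when $\theta \ne 0$, and any point of the sphere otherwise. Coupling $X = \theta + \eps W$ with $\bar X = \bar\theta + \eps W$ through the same $W \sim \normal(0,I_n)$ gives $\|X - \bar X\| = \|\theta - \bar\theta\| \le \omega_n$ deterministically, hence $\big| \|X\| - \|\bar X\| \big| \le \omega_n$ almost surely. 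Minkowski's inequality applied to $\|X\| - \gamma_n = (\|X\| - \|\bar X\|) + (\|\bar X\| - \gamma_n)$ then yields
\[
\big( \bEx\big[\,|\|X\| - \gamma_n|^2\,\big]\big)^{1/2} \le \omega_n + \big( \bEx\big[\,|\|\bar X\| - \gamma_n|^2\,\big]\big)^{1/2},
\]
so it suffices to bound the last term by $2\eps$, uniformly in the choice of $\bar\theta$ on the sphere.

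For that I would use $\big|\|\bar X\| - \gamma_n\big| = |\|\bar X\|^2 - \gamma_n^2|/(\|\bar X\| + \gamma_n) \le |\|\bar X\|^2 - \gamma_n^2|/\gamma_n$, together with the expansion $\|\bar X\|^2 - \gamma_n^2 = 2\eps \langle \bar\theta, W\rangle + \eps^2(\|W\|^2 - n)$, which holds because $\|\bar\theta\|^2 = n\kappa^2$ and $\gamma_n^2 = n(\kappa^2 + \eps^2)$. The two terms are uncorrelated (their product is odd in $W$), so
\[
\bEx\big[(\|\bar X\|^2 - \gamma_n^2)^2\big] = 4\eps^2 \|\bar\theta\|^2 + \eps^4\,\Var(\|W\|^2) = 4\eps^2 n\kappa^2 + 2\eps^4 n,
\]
using $\Var(W_1^2) = \bEx[W_1^4] - 1 = 2$. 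Dividing by $\gamma_n^2 = n(\kappa^2+\eps^2)$ gives $\bEx[\,|\|\bar X\| - \gamma_n|^2\,] \le 2\eps^2(2\kappa^2 + \eps^2)/(\kappa^2 + \eps^2) \le 4\eps^2$, since $2\kappa^2 + \eps^2 \le 2(\kappa^2 + \eps^2)$. Plugging this into the previous display and taking the supremum over $\theta \in \Theta_n$ proves the claim with $\gamma_n = \sqrt{n(\kappa^2+\eps^2)}$ and $\tau_n = \omega_n + 2\eps$.

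The one step that needs care is this last numerical bound: if one estimates the $L^2$ norm of $\|\bar X\|^2 - \gamma_n^2$ by Minkowski \emph{before} dividing by $\gamma_n$ one picks up a spurious term of order $\eps^2$ and the constant fails to be $2$; computing the second moment of $\|\bar X\|^2 - \gamma_n^2$ exactly --- exploiting that the cross term vanishes by symmetry and that $\|W\|^2$ is a $\chi^2_n$ variable --- is what makes the constant land exactly at $2\eps$. Everything else (the reduction to the sphere, Minkowski, and the identity $a-b = (a^2-b^2)/(a+b)$) is routine; the degenerate case $\theta = 0$, which can occur only when $\omega_n \ge \kappa\sqrt{n}$, is covered by the ``any point of the sphere'' choice of $\bar\theta$.
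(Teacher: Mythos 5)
Your proof is correct and follows essentially the same route as the paper: peel off an $\omega_n$ term for the radius mismatch via Minkowski, then use the identity $|\,\|X\|-c\,| \le |\,\|X\|^2-c^2|/c$ together with the (non-central) chi-square variance of $\|X\|^2$ to land exactly on $2\eps$. The only cosmetic difference is that you absorb the off-sphere discrepancy by coupling $X$ with $\bar X = \bar\theta + \eps W$ for a projected mean $\bar\theta$, whereas the paper centers $\|X\|$ at $\sqrt{\bEx[\|X\|^2]}$ and bounds $|\sqrt{\bEx[\|X\|^2]}-\gamma_n|\le\omega_n$ directly; both are valid and yield the same constants.
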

\begin{proof}
Let $\mu = \ex{\|X\|^2} = \|\theta\|^2 + n \eps^2$. 
By the triangle inequality,
\begin{align}
\label{eq:omega_proof}
    \sqrt{\ex{  \left| \|X\| - \gamma_n \right|^2 } } & 
    \le     \sqrt{\ex{  \left| \|X\| - \sqrt{ \mu} \right|^2 } } + \left| \sqrt{\mu} - \gamma_n \right|.
\end{align}
The assumption $\Theta_n \subseteq \cS_n$ implies
\begin{align*}
|\sqrt{\mu} - \gamma_n| & = |\sqrt{\|\theta\|^2 + n\epsilon^2} - \sqrt{n(\kappa^2 + \epsilon^2)}| \\
& = \left|\|\theta\| - \kappa\sqrt{n} \right|\frac{\|\theta\| + \sqrt{n}\kappa }{\sqrt{\|\theta\|^2 + n \epsilon^2} +\sqrt{n(\kappa^2 + \epsilon^2)}} \\
& \leq \left|\|\theta\| - \kappa\sqrt{n} \right| \leq \omega_n.
\end{align*}
Consequently, the second term on the right-hand side of \eqref{eq:omega_proof} is upper bounded by $\omega_n$. For the first term, we write
\begin{align*}
     & \ex{  \left| \|X\| - \sqrt{ \mu} \right|^2 } \le   \ex{  \left| \|X\| - \sqrt{ \mu} \right|^2 \left(1 + \frac{ \|X\|}{\sqrt{\mu}}\right)^2}  \\
     & \quad = \frac{ \Var(\|X\|^2) }{\mu} = \frac{2 \eps^2 (  2\|\theta\|^2 + n \eps^2)}{\|\theta\|^2 + n\eps^2} \le 4 \eps^2,
\end{align*}
where we have used the fact that $\|X\|^2/\eps^2$ has a  non-central chi-squared distribution with $n$ degrees of freedom and non-centrality parameter $\|\theta\|^2/\eps^2$
\end{proof}

In this setting, the AWGN-corrupted data $Z$, corresponding to a bitrate $R$ and magnitude $\gamma_n$, is drawn according to the Gaussian location model whose noise variance depends on the original noise level $\eps^2$ and the bitrate $R$: 
\begin{align} \label{eq:location_model_Z}
Z \sim \normal(\theta, (\eps^2 + \sigma^2) I_n), \qquad \sigma^2 = \frac{\kappa^2 + \eps^2}{2^{2R}-1}. 
\end{align}
The MSE in the Gaussian location model has been studied extensively. If we restrict our attention to linear estimators of the form $\hat{\theta}(z) = \lambda z$ then a standard calculation (see e.g., \cite[Ch. 4.8]{johnstone2011gaussian}) gives
\begin{align}
    \inf_{\lambda \ge 0} \sup_{\theta \, : \, \|\theta\| \le \kappa \sqrt{n}} \frac{1}{n} \ex{ \| \theta - \lambda Z\|^2} = \frac{\kappa^2 ( \eps^2 + \sigma^2)}{\kappa^2 + \eps^2 + \sigma^2},
\end{align}
where the minimum over $\lambda$ is attained at $\lambda^* = k^2/( \kappa^2 + \eps^2 + \sigma^2)$. 
By expressing the right-hand side as a function of $R$ and combining with Theorem~\ref{thm:parameter_estimation}, we obtain a non-asymptotic upper bound on the minimax MSE.
\begin{prop}
\label{prop:linear_minimax}
Consider the model $X \sim \normal(\theta, \eps^2 I_n)$ with $\Theta_n \subseteq \cS_n$. 
 Let $Y$ be the output of a bitrate-$R$ random spherical code applied to $X$ and scaled to the radius $\sqrt{n(\kappa^2+\epsilon^2)/(1-2^{-2R})}$. Then
\begin{align}\label{eq:prop_parameterd_linear}
\frac{1}{n}\ex{\left\|\theta - \lambda^* Y \right\|^2} \le \frac{ \kappa^2 \eps^2}{\kappa^2 + \eps^2} +\frac{ \kappa^4}{ \eps^2 + \kappa^2} 2^{-2R} + C\, \frac{1 \vee \omega_n}{\sqrt{n}},
\end{align}
where $C$ is a constant that depends on $(\kappa, \eps, R)$ but not  $n$. 
\end{prop}
\begin{proof}
We have $\gamma_n/\sqrt{n} = \sqrt{\kappa^2+\epsilon^2}$, and 
$\|\hat{\theta}\|_{\Lip}\leq 1$ for the linear estimator $\hat{\theta}(z)= \lambda^* z$. 
Following Proposition~\ref{prop:concentration_theta}, Assumption~\ref{assumption:near_sphere} is satisfied with $\tau_n = \omega_n + 2 \epsilon$. We use Theorem~\ref{thm:parameter_estimation} with $t=(1 \vee \omega_n)/\sqrt{n}$ and  $\beta_n = (\omega_n + 2 \epsilon)\vee \sqrt{\kappa^2+\epsilon^2}$. 
 It follows that there exists a constant $c$ such that 
\begin{align*}
\medmath{ \frac{1}{n}\ex{\left\|\theta - \lambda^* Y \right\|^2} } &  \medmath{ \le \left(1+\frac{1 \vee \omega_n}{\sqrt{n}}\right) \left( \frac{ \kappa^2 \eps^2}{\kappa^2 + \eps^2} +\frac{ \kappa^4}{ \eps^2 + \kappa^2} 2^{-2R} \right)} \\
& \quad  \medmath{ + \left( 1 + \frac{\sqrt{n}}{1\vee \omega_n} \right)c^2\, \frac{(\kappa^2+\epsilon^2) \vee (\omega_n+2\epsilon)^2}{n}}.
\end{align*}
Grouping $1/\sqrt{n}$ factors leads to \eqref{eq:prop_parameterd_linear}. 
\end{proof}
Since 
\[
\Mcal^*_n \leq \frac{1}{n}\ex{\left\|\theta - \lambda^* Y \right\|^2},
\]
Proposition~\ref{prop:linear_minimax} recovers parts of the results in \cite{zhu2014quantized} by showing that there exists a bitrate-$R$ coding scheme with minimax risk approaching \eqref{eq:linear_minimax_risk}. Furthermore, Proposition~\ref{prop:linear_minimax} shows that the minimax risk under such scheme converges at 
rate $1/\sqrt{n}$, which is faster than the convergence rate established in \cite{zhu2014quantized} by a factor of $1/\sqrt{\log n}$. \\

More generally, we can also provide bounds for non-linear estimators. The case of a $k$-sparse parameter vector can be modeled as \[
\Theta_n = \cS_n \cap \{ \theta \in \reals^n \, : \, \|\theta\|_0 \le k\}
\]
where $\|\theta\|_0$ denotes the number of nonzero entries in $\theta$. A great deal of work has studied the MSE of the soft-thresholding estimator 
\begin{align}
\hat{\theta}_\lambda(z) \coloneqq \begin{cases} z - \lambda, & z > \lambda, \\
0, & |z| \leq \lambda, \\
z+\lambda, & z < \lambda.
\end{cases}
\end{align}
in the model \eqref{eq:location_model_Z} \cite{Donoho1994,johnstone2011gaussian}. Specifically, we have 
\begin{align} \label{eq:sparse_bound}
\adjustlimits \inf_{\lambda \ge 0} \sup_{\theta \, : \, \|\theta\|_0 \le  k}   \frac{1}{n} \ex{ \| \theta - \hat{\theta}_{\lambda }(Z)\|^2}  \le (\eps^2 + \sigma^2) \beta_0 \left(\frac{k}{n}\right),
\end{align}
where, for $\nu>0$, 
\begin{align}
    \beta_0(\nu) = \inf_{\lambda \ge 0} \medmath{\left\{  (1-\nu) [ 2 ( 1+ \lambda^2) \Phi(-\lambda)  - 2 \lambda \phi(\lambda)] + \nu (1+\lambda^2)\right\}}
\end{align}
where $\Phi(z)$ and $\phi(z)$ are the cumulative and density functions of the standard Gaussian distribution, respectively. Let $\lambda^*$ be the minimizer in \eqref{eq:sparse_bound}. 
\begin{prop} \label{prop:parameter_sparse}

Let $X\sim(\theta,\eps^2 I_n)$ where $\theta \in \Scal_n \cap \{ \theta\in \reals^n \,:\, \|\theta\|_0\leq k\}$. Let $Y$ be the output of a bitrate-$R$ random spherical code applied to $X$ and scaled to the radius $\sqrt{n(\kappa^2+\epsilon^2)/(1-2^{-2R})}$. Then 
\begin{align} \label{eq:sparse_minmax}
\frac{1}{n} \ex{\left\| \theta - \hat{\theta}_{\lambda^*}(Y)\right\|^2} \le  
\medmath{\left(\eps^2 + \frac{\kappa^2+\epsilon^2}{2^{2R}-1}\right) \beta_0 \left(\frac{k}{n} \right)} +  C \frac{1 \vee \omega_n}{\sqrt{n}},
\end{align}
where $C$ is a constant that depends on $(\kappa, \eps, R)$ but neither $k$ or $n$. 
\end{prop}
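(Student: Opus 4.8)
The plan is to mirror the proof of the linear bound \eqref{eq:prop_parameterd_linear}, with the linear estimator replaced by the soft-thresholding estimator $\hat\theta_\lambda$ and the shrinkage risk computation replaced by the sparse risk bound \eqref{eq:sparse_bound}. Almost all the ingredients are already in place. By Proposition~\ref{prop:concentration_theta}, the model $X\sim\normal(\theta,\eps^2 I_n)$ with $\theta\in\Scal_n$ satisfies Assumption~\ref{assumption:near_sphere} with $\gamma_n=\sqrt{n(\kappa^2+\eps^2)}$ and $\tau_n=\omega_n+2\eps$; for this $\gamma_n$ the AWGN surrogate appearing in Theorem~\ref{thm:parameter_estimation} is $Z\sim\normal(\theta,(\eps^2+\sigma^2)I_n)$ with $\sigma^2=(\kappa^2+\eps^2)/(2^{2R}-1)$, i.e.\ exactly the model \eqref{eq:location_model_Z}; and the minimax soft-thresholding risk over $k$-sparse vectors in that model is controlled by \eqref{eq:sparse_bound}. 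The only thing that is not already available is the Lipschitz constant of $\hat\theta_\lambda$.

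So the first step is to check that $\hat\theta_\lambda:\reals^n\to\reals^n$ is $1$-Lipschitz in the Euclidean norm, uniformly in $\lambda\ge 0$ and in $n$. The scalar soft-threshold $z\mapsto\hat\theta_\lambda(z)$ is a contraction, $|\hat\theta_\lambda(z)-\hat\theta_\lambda(z')|\le|z-z'|$, which one verifies by splitting into cases according to the positions of $z,z'$ relative to $\pm\lambda$ (equivalently, it is the proximal operator of $\lambda|\cdot|$). Applying this coordinatewise gives $\|\hat\theta_\lambda(z)-\hat\theta_\lambda(z')\|^2=\sum_i|\hat\theta_\lambda(z_i)-\hat\theta_\lambda(z'_i)|^2\le\|z-z'\|^2$, hence $\|\hat\theta_\lambda\|_{\Lip}\le 1$.

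The second step is to invoke Theorem~\ref{thm:parameter_estimation} with $d_n=n$, the estimator $\hat\theta_\lambda$ for the $\lambda$ attaining the infimum in \eqref{eq:sparse_bound}, and the free parameter $t=(1\vee\omega_n)/\sqrt n$. Since $\Theta_n\subseteq\{\theta:\|\theta\|_0\le k\}$, the bound \eqref{eq:sparse_bound} gives $\sup_{\theta\in\Theta_n}\tfrac1n\ex{\|\hat\theta_\lambda(Z)-\theta\|^2}\le(\eps^2+\sigma^2)\beta_0(k/n)$, and \eqref{eq:thm_parameter2} with $\alpha_n=\tau_n\vee(\gamma_n/\sqrt n)=(\omega_n+2\eps)\vee\sqrt{\kappa^2+\eps^2}$ and $\|\hat\theta_\lambda\|_{\Lip}\le 1$ yields
\[
\Mcal_n^*\le\Bigl(1+\tfrac{1\vee\omega_n}{\sqrt n}\Bigr)\Bigl(\eps^2+\tfrac{\kappa^2+\eps^2}{2^{2R}-1}\Bigr)\beta_0\!\Bigl(\tfrac kn\Bigr)+\Bigl(1+\tfrac{\sqrt n}{1\vee\omega_n}\Bigr)\tfrac{C^2\alpha_n^2}{n}.
\]
Then the last step is to clean up the remainder: $\beta_0\le 1$, while $\omega_n+2\eps\le(1+2\eps)(1\vee\omega_n)$ and $\sqrt{\kappa^2+\eps^2}\le\sqrt{\kappa^2+\eps^2}\,(1\vee\omega_n)$ give $\alpha_n^2\le c_{\kappa,\eps}(1\vee\omega_n)^2$; hence the first correction term is $O\!\bigl((1\vee\omega_n)/\sqrt n\bigr)$, and the second is at most $c_{\kappa,\eps}C^2\bigl[(1\vee\omega_n)^2/n+(1\vee\omega_n)/\sqrt n\bigr]$, which is again $O\!\bigl((1\vee\omega_n)/\sqrt n\bigr)$ precisely because $\omega_n=o(\sqrt n)$. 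Absorbing all the $n$-independent factors into a single constant $C=C(\kappa,\eps,R)$ gives \eqref{eq:sparse_minmax}.

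The computation is entirely routine; the only place where anything about the estimator itself enters is the uniform Lipschitz bound $\|\hat\theta_\lambda\|_{\Lip}\le 1$, and that is the step I would be careful to state cleanly, since everything downstream is the bookkeeping already carried out in Theorem~\ref{thm:parameter_estimation} together with the hypothesis $\omega_n=o(\sqrt n)$, which is exactly what is needed to absorb the $(1\vee\omega_n)^2/n$ contribution into the $(1\vee\omega_n)/\sqrt n$ error rate.
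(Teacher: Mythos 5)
Your proposal is correct and follows essentially the same route as the paper's (very terse) proof: verify $\|\hat{\theta}_\lambda\|_{\Lip}\le 1$, feed the sparse risk bound \eqref{eq:sparse_bound} for the AWGN surrogate \eqref{eq:location_model_Z} into Theorem~\ref{thm:parameter_estimation}, and absorb the correction terms into $C(1\vee\omega_n)/\sqrt{n}$ using $\omega_n=o(\sqrt{n})$. Your choice $t=(1\vee\omega_n)/\sqrt{n}$ (rather than the paper's $t=\omega_n/\sqrt{n}$) is a harmless, arguably cleaner, variant that avoids any issue when $\omega_n$ is small.
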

\begin{proof}
For any $\lambda >0$ we have $\|\hat{\theta}_{\lambda} \|_{\Lip} = 1$. 
Equation \eqref{eq:sparse_minmax} follows from Theorem~\ref{thm:parameter_estimation} by using
$t = (1\vee\omega_n)/\sqrt{n}$ and grouping $1/\sqrt{n}$ factors.
\end{proof}

\begin{cor}
\label{cor:sparse_regression}
Assume that $X\sim(\theta,\eps^2 I_n)$. The bitrate-$R$ constrained minimax risk  over $\theta \in \Scal_n \cap \{ \theta\in \reals^n \,:\, \|\theta\|_0\leq k\}$, with $\omega_n/\sqrt{n}\to 0$, satisfies 
\[
\Mcal_n^* \leq \left(\eps^2 + \frac{\kappa^2+\epsilon^2}{2^{2R}-1}\right) \beta_0 \left(\frac{k}{n} \right).
\]
\end{cor}

    
    
    
    

\subsection{Linear Model with IID Matrix
\label{subsec:AMP}
} 
\newcommand{\AMP}{\mathsf{AMP}}

For the next example, we consider the linear model 
\begin{align}
    X \sim \normal(A \theta, \eps^2I_n),
\end{align}
where $A$ is a known $n \times d$ matrix,  $\theta$ is an unknown $d$-dimensional vector, and $\eps^2$ a known noise variance. In this setting, the AWGN-corrupted version of $X$ given by  $Z =X + \eps W$ with $W \sim \normal(0,\sigma^2 I_n)$ corresponds to a linear model with larger noise variance, that is 
\begin{align}
Z \sim \normal(A \theta, \xi^2 I_n), \qquad \xi^2 = \eps^2 + \sigma^2.
\end{align}

We study the approximate message passing (AMP) algorithm  \cite{donoho2009message} to estimate $\theta$ from $Z$. AMP is an iterative algorithm that can be defined by a sequence of scalar denoising functions $\{\eta_t\}_{t \ge 1}$ with $\eta_t : \reals \to \reals$ that are assumed to be Lipschitz continuous, and hence differentiable almost everywhere. Starting with an initial points $\hat{\theta}^0 = 0_{d \times 1}$ and $r^0 = 0_{n \times 1}$, a sequence of estimates $\hat{\theta}^t$ is generated according to 
\begin{align}
     \hat{\theta}^{t+1} &= \eta_t
     \left( A^\top r^t + \hat{\theta}^t \right), \label{eq:AMP1} \\
     r^t & =  Z - A \hat{\theta}^t + \frac{d}{n}r^{t-1}  \operatorname{div}\left( \eta_t( A^\top r^{t-1} + \hat{\theta}^{t-1}) \right) \label{eq:AMP2}
\end{align}
where $\eta_t(\cdot)$ is applied comopontwise and $\operatorname{div}(\eta_t(z)) = \frac{1}{n} \sum_{i=1}^n \eta_t'(z_i)$ with $\eta'_t(z) = \frac{\dd}{ \dd z} \eta_t(z)$. 

The main result of \cite{donoho2009message, BayatiMontanari2011} says that the MSE of each iteration of AMP can be characterized precisely in the high-dimensional limit when $A$ is a realization of a random matrix with i.i.d.\ zero-mean Gaussian entries. 
To formally state and use this result, we need to the following assumptions:
\begin{assumption}
\label{assumption:AMP1} 
$\{ \theta(n)\}_{n \in \mathbb N}$ is a sequence of $d_n$-dimensional vectors such that $n/d_n \to \delta \in (0,\infty)$ as $n$ goes to infinity. The empirical distributions of $\theta(1),\theta(2),\dots$, i.e., 
the probability distribution that puts a point mass $1/d_n$ at each of the $d_n$ entries of $\theta(n)$,
converges weakly to a distribution $\pi$ on $\reals$ with finite second moment $\kappa^2$. Furthermore, $\|\theta(n)\|^2/n$ converges to $\kappa^2$ as $n \to \infty$.
\end{assumption}

\begin{assumption}
\label{assumption:AMP2}
$\{ P_{\theta(n),n}\}_{n\in \mathbb N}$ is a sequence of models defined by $X\sim(A\theta(n), \epsilon^2 I_n)$, where the entries of $A$ are i.i.d. 
$\Ncal(0,1/n)$.  
\end{assumption}

For a fixed $n$, we further consider a sequence of estimators for $\theta(n)$ defined as follows:
\begin{assumption}
\label{assumption:AMP3}
$\{\eta_t\}_{t\in\mathbb N}$ is a sequence of scalar, Lipschitz continuous, and differentiable denoisers $\eta_t : \reals \to \reals$. For every $n,d_n \in \mathbb N$, the approximate message-passing (AMP) estimator $\theta^t_{\AMP}(z)$ is defined as the results of $t$ iterations of \eqref{eq:AMP1} and \eqref{eq:AMP2}. 
\end{assumption} 

The characterization of the MSE of the estimator $\theta^t_{\AMP}$ in the high-dimensional limit is given by the \emph{state evolution} recursion. This recursion is defined in terms of a distribution $\pi$ on $\reals$, sampling ratio $\delta\in(0,\infty)$, and initial noise level $\tau_0$, as
\begin{align}
    \label{eq:SE} 
    \tau_{t+1}^2 =
     \xi^2+ \frac{1}{\delta}\ex{ \left( \eta_t \left(\theta_0 + \tau_{t}W  \right) - \theta_0\right)^2 }  ,
     \quad t=1,2,\ldots,
\end{align}
where $\theta_0 \sim \pi$ and $W\sim\Ncal(0,1)$. Finally, define 
\begin{align*}
 \Mcal_{\mathsf{AMP}}^t(\xi^2) \coloneqq
\ex{ \left( \eta_t \left(\theta_0 + \tau_t W  \right) - \theta_0\right)^2 }, 
\end{align*}
where $\tau_t$ is given by $t$ iterations of \eqref{eq:SE}. Under assumptions \ref{assumption:AMP1}-\ref{assumption:AMP3} above, \cite[Thm. 1]{BayatiMontanari2011} implies that 
\begin{equation}
    \label{eq:BayatiMontanari_conclusion}
\lim_{n\to \infty} \frac{1}{d_n} \left\| \theta(n) - \theta^t_{\mathsf{AMP}}(Z) \right\|^2
= 
\Mcal^t_{\mathsf{AMP}}(\xi^2). 
\end{equation}

Combining this result with  Theorem~\ref{thm:parameter_estimation}, we conclude the following:
\begin{thm} \label{thm:AMP}
Consider a sequence of problems satisfying Assumptions~\ref{assumption:AMP1} and \ref{assumption:AMP2}. Let $Y$ be the output of an $(n,\lceil 2^{n R}\rceil)$ random spherical code applied to $X$ with radius $\rho$ for some $R>0$. Let $\theta^t_{\AMP}$ be an estimator satisfying Assumption~\ref{assumption:AMP3}. Then
\begin{align}
    \label{eq:thm_AMP}
\lim_{n\to \infty} \frac{1}{d_n} \ex{ \left\| \theta(n) - \theta^T_{\mathsf{AMP}}(Y) \right\|^2 \mid A} = \Mcal^t_{\mathsf{AMP}}(\xi^2_R),
\end{align}
almost surely, where
\[
\quad \xi^2_R = \epsilon^2 + \frac{\epsilon^2+\kappa^2/\delta}{2^{2R}-1}. 
\]
\end{thm}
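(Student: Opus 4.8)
The plan is to transfer the asymptotic AMP risk characterization \eqref{eq:BayatiMontanari_conclusion}, which holds for the AWGN-corrupted observation $Z$, over to the compressed observation $Y$ by invoking Theorem~\ref{thm:parameter_estimation}. First I would pin down the relevant AWGN model. Since $X \sim \normal(A\theta(n), \epsilon^2 I_n)$ and the extra noise added in Theorem~\ref{thm:parameter_estimation} has variance $\sigma^2 = \gamma_n^2/(n(2^{2R}-1))$ with $\gamma_n^2 = n(\epsilon^2 + \kappa^2/\delta)$, the comparison vector $Z = X + \sigma W$ is distributed as $\normal(A\theta(n), \xi_R^2 I_n)$ with exactly the stated $\xi_R^2 = \epsilon^2 + (\epsilon^2+\kappa^2/\delta)/(2^{2R}-1)$. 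Hence \eqref{eq:BayatiMontanari_conclusion}, applied with noise level $\xi_R^2$, gives $\tfrac1{d_n}\|\theta(n) - \theta^T_{\AMP}(Z)\|^2 \to \Mcal^T_{\AMP}(\xi_R^2)$ for almost every realization of $(A,W)$; upgrading this to convergence of the conditional expectation $\tfrac1{d_n}\ex{\|\theta(n) - \theta^T_{\AMP}(Z)\|^2 \mid A}$ will follow from uniform integrability, which is immediate from the Lipschitz bound on the AMP map discussed below together with the uniformly bounded normalized second moments of $Z$ and of $\theta(n)$.

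Next I would verify Assumption~\ref{assumption:near_sphere}, conditionally on $A$, for the singleton model $\{P_{n,\theta(n)}\}$ with $\gamma_n = \sqrt{n(\epsilon^2 + \kappa^2/\delta)}$. Writing $\mu_A = \ex{\|X\|^2 \mid A} = \|A\theta(n)\|^2 + n\epsilon^2$ and splitting $\big(\ex{\,|\,\|X\| - \gamma_n|^2 \mid A}\big)^{1/2}$ via the triangle inequality into $\big(\ex{\,|\,\|X\| - \sqrt{\mu_A}\,|^2 \mid A}\big)^{1/2}$ and $|\sqrt{\mu_A} - \gamma_n|$, the first term is bounded by $2\epsilon$ exactly as in the proof of Proposition~\ref{prop:concentration_theta} (using $\ex{\,|\,\|X\| - \sqrt{\mu_A}\,|^2 \mid A} \le \Var(\|X\|^2 \mid A)/\mu_A$ and the noncentral $\chi^2_n$ law of $\|X\|^2/\epsilon^2$). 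For the second term, since the rows of $A$ are i.i.d.\ $\normal(0, \tfrac1n I_{d_n})$ we have $\|A\theta(n)\|^2 = \tfrac{\|\theta(n)\|^2}{n}\sum_{j=1}^n g_j^2$ with $g_j$ i.i.d.\ standard normal, and combining $\tfrac1n\sum g_j^2 \to 1$ with $\|\theta(n)\|^2/d_n \to \kappa^2$ and $d_n/n \to 1/\delta$ yields $\|A\theta(n)\|^2/n \to \kappa^2/\delta$ almost surely, so $\mu_A - \gamma_n^2 = o(n)$ and therefore $|\sqrt{\mu_A} - \gamma_n| = o(\sqrt n)$. Thus Assumption~\ref{assumption:near_sphere} holds with some $\tau_n = o(\sqrt n)$, and $\alpha_n = \tau_n \vee (\gamma_n/\sqrt n)$ obeys $\alpha_n^2/d_n \to 0$.

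The remaining ingredient, and the main obstacle, is to show that the $T$-fold AMP map $z \mapsto \theta^T_{\AMP}(z)$ is Lipschitz with a constant that, conditionally on $A$, stays bounded in $n$ along almost every realization of $A$. Unrolling \eqref{eq:AMP1}--\eqref{eq:AMP2}, this map is a composition of the coordinatewise Lipschitz denoisers $\eta_1, \dots, \eta_T$ with affine maps built from $A$ and $A^\top$ and with the scalar Onsager factors $\tfrac{d}{n}\operatorname{div}(\eta_t(\cdot))$; the operator norms of the affine parts are controlled by $\|A\|_{\mathrm{op}}$, which converges almost surely to $1 + \sqrt{1/\delta}$ by the Bai--Yin theorem and is thus bounded in $n$ on a probability-one event. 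The delicate point is the Onsager term $\tfrac{d}{n}\, r^{t-1}\operatorname{div}(\eta_t(A^\top r^{t-1} + \hat\theta^{t-1}))$: since $\eta_t'$ need only exist almost everywhere, $\operatorname{div}(\eta_t(\cdot))$ need not even be continuous in $z$, so this step requires either a mild strengthening of Assumption~\ref{assumption:AMP3} (e.g.\ that each $\eta_t'$ is itself Lipschitz, which keeps $\operatorname{div}(\eta_t(\cdot))$ Lipschitz in $z$) or an approximation argument: replace each $\eta_t$ by a smooth denoiser, invoke stability of both the AMP recursion and of the state-evolution fixed point \eqref{eq:SE} under perturbation of the denoiser, and pass to the limit. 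I expect this to be the only nontrivial step; the rest is bookkeeping.

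Given these pieces, the proof concludes by applying the bound \eqref{eq:thm_parameter} of Theorem~\ref{thm:parameter_estimation} to the Lipschitz estimator $\theta^T_{\AMP}$ and the parameter $\theta(n)$, and normalizing by $\sqrt{d_n}$:
\[
\Big| \big(\tfrac1{d_n}\ex{\|\theta^T_{\AMP}(Y) - \theta(n)\|^2 \mid A}\big)^{1/2} - \big(\tfrac1{d_n}\ex{\|\theta^T_{\AMP}(Z) - \theta(n)\|^2 \mid A}\big)^{1/2} \Big| \le \frac{C\,\|\theta^T_{\AMP}\|_{\mathrm{Lip}}\, \alpha_n}{\sqrt{d_n}} \longrightarrow 0 .
\]
Combining this with the limit $\tfrac1{d_n}\ex{\|\theta^T_{\AMP}(Z) - \theta(n)\|^2 \mid A} \to \Mcal^T_{\AMP}(\xi_R^2)$ from the first paragraph gives \eqref{eq:thm_AMP}, almost surely over $A$.
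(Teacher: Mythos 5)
Your proposal is correct and follows essentially the same route as the paper: verify Assumption~\ref{assumption:near_sphere} with $\gamma_n^2 = n(\epsilon^2+\kappa^2/\delta)$ using the noncentral chi-square law and the concentration of $\|A\theta(n)\|^2/n$, establish a uniform-in-$n$ Lipschitz bound for $\theta^T_{\AMP}$ on an almost-sure event where $\|A\|_{\mathrm{op}}$ is bounded, and then combine Theorem~\ref{thm:parameter_estimation} with the Bayati--Montanari limit \eqref{eq:BayatiMontanari_conclusion} via the triangle inequality. The one point worth noting: the Onsager-term subtlety you flag is genuine --- the paper's Appendix~\ref{app:AMP} bounds $\frac{1}{n}\| r^{t-1}(x)\sum_i \eta_{t-1}'(\cdot) - r^{t-1}(\tilde x)\sum_i \eta_{t-1}'(\cdot)\|$ by $L_{t-1}\|r^{t-1}(x)-r^{t-1}(\tilde x)\|$, which silently drops the cross term $r^{t-1}(\tilde x)\bigl(\operatorname{div}-\widetilde{\operatorname{div}}\bigr)$, so your proposed remedy (Lipschitz $\eta_t'$, or a smoothing/stability argument) is a needed repair rather than an extra complication, and you are also more explicit than the paper about passing from almost-sure convergence in \eqref{eq:BayatiMontanari_conclusion} to convergence of the conditional expectation.
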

\begin{proof}
Set $\gamma_n^2 = n(\epsilon^2+\kappa^2/\delta)$ and $\sigma^2 = \gamma_n/(n(2^{2R}-1))$. We first show that $X$ and $\gamma_n$ satisfy Assumption~\ref{assumption:near_sphere}. Since $A$ has i.i.d. entries $\Ncal(0,1/n)$, then $X\sim \Ncal\left(0,(\frac{1}{n} \|\theta(n)\|^2 + \epsilon^2)I_n\right)$. Using similar arguments as in Proposition~\ref{prop:concentration_theta}, we get
\begin{align*}
     & \left(\ex{ | \|X\| - \gamma_n|^2}\right)^{1/2} \\
     & \qquad  \le \left( \ex{ \left|\|X\|-\sqrt{\|A\theta(n)\|^2+n \epsilon^2} \right|^2} \right)^{1/2}
     + \omega_n.
\end{align*}
Assumption~\ref{assumption:AMP1} implies that $\theta = \theta(n) \in \Scal_{d_n}$ with $\omega_n = o(\sqrt{d_n})$. We conclude that
\begin{align*}
    & \ex{ \left|\|X\|-\sqrt{\|A\theta(n)\|^2+n \epsilon^2} \right|^2} \\
    & \qquad  \leq 
    \ex{ \frac{ 
    \left( \|X\|^2 - \|A\theta(n)\|^2 + n \epsilon^2 \right)^2}{ \|A \theta\|^2+n \epsilon^2 } } \\
    & \qquad \leq \frac{\Var(\|X\|^2)}{n \epsilon^2} = \frac{2n (\frac{1}{n}\|\theta(n)\|^2 + \eps^2)^2}{n\epsilon^2} \\ 
    & \qquad  \leq  
    \frac{2 (\frac{(\omega_n+\sqrt{d_n}\kappa)^2}{n} + \eps^2)^2}{\epsilon^2} = O(1),
\end{align*}
and thus Assumption~\ref{assumption:near_sphere} is satisfied for some $\tau_n = o(\sqrt{n})$. %
Let $L_{n,t} \coloneqq \|\theta^t_{\mathsf{AMP}}\|_{\Lip}$. In Appendix~\ref{app:AMP} we show that $\sup_n L_{n,t} < \infty$ almost surely. 
Applied to our setting, \eqref{eq:BayatiMontanari_conclusion} says that
\[
\left| \left(\frac{1}{d_n} \left\| \theta(n) - \theta^t_{\mathsf{AMP}}(Z) \right\|^2 \right)^{1/2}
- 
\sqrt{\Mcal^t_{\mathsf{AMP}}(\xi_R^2)} \right| = o(1).
\]
Using the triangle inequality once with the last display, Theorem~\ref{thm:parameter_estimation} implies that there exists $C$, that depends only on $R$ and $\kappa^2/\delta+\epsilon^2$, such that
\begin{align*}
 & \left| \left(\frac{1}{d_n} \ex{ \|  \theta(n)-  \theta^t_{\AMP}(Y)\|^2}\right)^{1/2} - \sqrt{\Mcal_{\AMP}^t(\xi_R^2)} \right| \\
 & \qquad \le \frac{C\, \| \theta^t_{\AMP} \|_\mathrm{Lip} \beta_n }{\sqrt{d_n}},
\end{align*}
with $\beta_n = o(\sqrt{d_n})$. 
\end{proof}

\section{Application to Indirect Source Coding 
\label{sec:source_coding}}

For the second application, we consider an indirect source coding setting where the observed data is a degraded version of the realization of an information source. The goal is to compress this version at bitrate $R$ and recover the source realization. Traditionally, both the encoder and decoder are designed with full knowledge of the joint distribution of the source and the data \cite{berger1971rate}. In this section, we study an encoding-decoding scheme where the encoder uses a random spherical code and the decoder is described by a Lipschitz estimator, which may be designed with partial or full knowledge of the distribution of the source and the data. Leveraging the results in Section~\ref{sec:main}, we show that the asymptotic performance can be described in terms of an AWGN model.\par
Throughout this section, the source and the data are modeled as a stochastic process $\{(U_n, X_n)\}_{n \in \mathbb{N}}$. The first $n$ terms in this sequence are denoted by $U^n = (U_1, \dots, U_n)$ and $X^n = (X_1 ,\dots, X_n)$. We focus on the squared error loss (or distortion function)  
\begin{align}
d(u^n, \hat{u}^n) = \frac{1}{n}\sum_{i=1}^n (u_i -\hat{u}_i)^2,
\end{align}
and assume the following regularity condition: 
\begin{assumption}\label{assumption:source_coding}
The process $\{(U_n, X_n)\}_{n \in \mathbb{N}}$  is stationary and second-order ergodic with finite second moments. In particular, this means that the empirical second moments converge in mean:
\begin{align}
    \frac{1}{n}\sum_{i=1}^n  \begin{pmatrix} U_i \\ X_i \end{pmatrix} \begin{pmatrix} U_i \\ X_i \end{pmatrix}^T \to \ex{ \begin{pmatrix} U_1 \\ X_1 \end{pmatrix} \begin{pmatrix} U_1 \\ X_1 \end{pmatrix} ^T }.
\end{align}
\end{assumption}

\subsection{The Indirect Distortion-Rate Function}
We begin by reviewing some basic properties of the indirect distortion-rate function, which describes the fundamental tradeoff between the bitrate $R$ and the expected distortion in our source coding setting. For each problem of size $n$, the indirect distortion-rate function is given by
\begin{align}
    D_n(R) & \coloneqq \min_{\phi, \psi}  \ex{ d\left( U^n ,  \psi( \phi(X^n))\right)},
\end{align}
where the minimum is over all encoding functions $\phi : \reals^n \to \{1, \dots, M\}$ and decoding functions $\psi: \{1, \dots, M\} \to \reals^n$ with $M = \lceil 2^{nR} \rceil$. The standard source coding setting corresponds to the special case where the source equals the data. When the source and data are stationary, as we assume in this paper, $n D_n(R)$ is sub-additive in $n$, and the limit 
\begin{align}
    D(R) \coloneqq \lim_{n \to \infty}  D_n(R)
\end{align}
is well-defined \cite[Lem. 10.6.2]{gray2011entropy}. \par
For some classes of processes, $D(R)$ can be expressed equivalently in terms of an optimization problem over a family of probability distributions subject to a mutual information constraint \cite{DobrushinTsybakov,berger1971rate}. Specifically, we have
\begin{align}
    D(R) = \lim_{n \to \infty} \min_{I(X^n ; \hat{U}^n) \le nR}   \ex{ d\left(U^n,  \hat{U}^n\right)},  \label{eq:DR_info}
\end{align}
where the minimum is over all joint distributions on $(U^n,X^n,\hat{U}^n)$ such that $(U^n,X^n)$ satisfy their marginal constraints, $U^n \to X^n \to \hat{U}^n$ forms a Markov chain, and $I(X^n; \hat{U}^n) \le n R$. 
For example, a representation of the form \eqref{eq:DR_info} exists for memoryless processes \cite{berger1971rate, 7464359} and in cases where the direct (standard) distortion-rate function of the sequence of random vectors $\tilde{U}^n = \ex{X^n \mid U^n}$ has a representation of the form \eqref{eq:DR_info}
by setting $X^n = U^n= \tilde{U}^n$ \cite[Ch. 3.2]{kipnis2017analog}. \par
There are a few cases where the distortion-rate function has simple closed-form expressions. For example if $\{(U_n,X_n)\}$ are i.i.d.\ from bivariate Gaussian distribution with zero mean, then the distortion-rate function is given by $D(R) = D_G(R)$ where 
\begin{align}
\label{eq:Gaussian_iDRF}
D_G(R) \coloneqq  \ex{|U_1|^2} -  \frac{\ex{ U_1 X_1}^2}{ \ex{ |X_1|^2}} \left( 1- 2^{-2R} \right).
\end{align}
This characterization was obtained in \cite{DobrushinTsybakov} and also \cite{1054469}. Note that the limiting case $R \to \infty$ corresponds to the minimum MSE in estimating $U_n$ from $X_n$. Moreover, for the direct source coding problem where $U_n$ is equal to $X_n$, this expression reduces to the standard distortion-rate function for an i.i.d.\ Gaussian source, $\ex{|X_1|^2} 2^{-2R}$.

\subsection{Achievability using Spherical Coding} 

We now consider the distortion that can be achieved when $X^n$ is compressed using a random spherical code. For each problem of size $n$, let $Y^n$ be the output of a bitrate-$R$ random spherical code with input $X^n$ and squared magnitude $n\ex{X_1^2} /(1 - 2^{-2R})$. The distortion-rate function associated with random spherical coding and estimator $f: \reals^n \to \reals^n$ is defined as 
\begin{align} \label{eq:Dspnf}
    \Dsp_n(R,f) & \coloneqq \ex{d\left( U^n, f(Y^n) \right)},
\end{align}
where the expectation is with respect to the joint distribution of $(U^n,Y^n)$. Under the squared error distortion, the minimum with respect to $f$ is achieved by the conditional expectation $f(y) = \ex{ U^n \mid Y^n = y}$. %
We note that this formulation of the distortion-rate function does not necessarily describe the optimal performance that is possible using a random spherical code, because the estimation stage is based only on the compressed representation $Y^n$ and does not use any other information about the realization of the codebook. \par
Following the central theme of this paper, our results are described in terms of an AWGN counterpart to the distortion-rate function. Given noise variance $\sigma^2$, define the sequence $\{Z_n\}_{n \in \mathbb{N}}$ by
\begin{align}
Z_n = X_n + \sigma W_n,
\end{align}
where $W_n$ is an independent standard Gaussian noise. The MSE associated with an estimator $f : \reals^n \to \reals^n$ is defined by
\begin{align}
\Mcal_n(\sigma^2, f) & \coloneqq  \ex{ d\left( U^n , f(Z^n) \right)},
\end{align}
The minimum over $f$ is attained by the conditional expectation $f(z) = \ex{ U^n \mid Z^n  = z}$ and is denoted by $\Mcal_n(\sigma^2) \coloneqq \min_{f} \Mcal_n(\sigma^2,f)$. Stationarity of the sequence $\{(U_n,Z_n)\}$ implies that $n \Mcal_n(\sigma^2)$ is sub-additive in $n$, and thus the following limit is well-defined
\begin{align}
\Mcal(\sigma^2) & \coloneqq \lim_{n \to \infty}  \Mcal_n(\sigma^2).
\end{align}
We refer to $\Mcal(\sigma^2)$ as the minimum MSE function associated with the AWGN model. 
The next result establishes the formal equivalence between the distortion-rate function associated with random spherical coding and $\Mcal(\sigma^2)$. The proof is based on the Gaussian approximation of quantization error in Theorem~\ref{thm:Wpbound} as well as some further properties of the AWGN model.
\begin{thm}
\label{thm:indirect_source_coding} 
Suppose that  $\{(U_n,X_n)\}$ is a random process satisfying Assumption~\ref{assumption:source_coding}. Let $\{f_n\}_{n \in \mathbb{N}}$ be a sequence of estimators $f_n: \reals^n \to \reals^n$ satisfying  $ \| f_n\|_\mathrm{Lip} \le L$ and $\|f_n(0)\|\le \sqrt{n} C$  for all $n$ where $L,C$ are positive constants. Then, for each $R>0$, 
\begin{align}
\lim_{n \to \infty}  \left|  \Dsp_n(R,f_n) - \Mcal_n(\sigma_R^2,f_n)\right| = 0 ,
\end{align}
where $\sigma^2_R = \ex{|X_1|^2} / (2^{2R}-1)$.
Furthermore, there exists a sequence of estimators $\{f_n\}_{n\in\mathbb N}$ such that
\begin{align}
\lim_{n \to \infty} \frac{1}{n} \Dsp(R,f_n)= \lim_{n \to \infty} \frac{1}{n} \Mcal(\sigma^2_R,f_n) = \Mcal(\sigma^2_R).
\end{align}
\end{thm}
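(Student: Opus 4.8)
The plan is to route the comparison through the AWGN model. I would construct one coupling of $(U^n, X^n, Y^n, Z^n)$ by first drawing $(U^n,X^n)$ from its joint law and then, conditionally on $X^n = x$, drawing $(Y^n,Z^n)$ from the coupling of Theorem~\ref{thm:prob_bound}; this preserves the conditional laws $P_{Y^n\mid X^n}$ and $P_{Z^n\mid X^n}$ and keeps $U^n$ conditionally independent of $(Y^n,Z^n)$ given $X^n$. Writing the per-letter root distortions as $L^2$-norms over the probability space, the triangle inequality together with $\|f_n\|_\mathrm{Lip}\le L$ gives
\[
\left| \sqrt{\Dsp_n(R,f_n)} - \sqrt{\Mcal_n(\sigma_R^2,f_n)} \right|
\le \frac{L}{\sqrt n}\left( \ex{\|Y^n - Z^n\|^2}\right)^{1/2}.
\]
Applying \eqref{eq:YZ_p_bound} with $p=2$ conditionally on $X^n$ and then averaging, and using $\gamma_n^2 = n\,\ex{|X_1|^2}$, yields $\ex{\|Y^n - Z^n\|^2} \le 2\,\ex{\big|\,\|X^n\| - \gamma_n\,\big|^2} + O(1)$, so the whole first claim reduces to showing $\ex{\big|\,\|X^n\| - \gamma_n\,\big|^2} = o(n)$.

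To get that, note $\ex{\|X^n\|^2} = \sum_i \ex{X_i^2} = \gamma_n^2$ by stationarity, so $\ex{\big|\,\|X^n\| - \gamma_n\,\big|^2} = 2\gamma_n\big(\gamma_n - \ex{\|X^n\|}\big)$, and $\gamma_n - \ex{\|X^n\|} = \sqrt n\big(\sqrt{\ex{|X_1|^2}} - \ex{\sqrt{\|X^n\|^2/n}}\big)$. By Assumption~\ref{assumption:source_coding}, $\|X^n\|^2/n \to \ex{|X_1|^2}$ in $L^1$; hence $\{\|X^n\|^2/n\}$ is uniformly integrable, and since $\sqrt t\le 1+t$ so is $\{\sqrt{\|X^n\|^2/n}\}$, which also converges in probability to $\sqrt{\ex{|X_1|^2}}$. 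Therefore $\ex{\sqrt{\|X^n\|^2/n}} \to \sqrt{\ex{|X_1|^2}}$, so $\gamma_n - \ex{\|X^n\|} = o(\sqrt n)$ and $\ex{\big|\,\|X^n\| - \gamma_n\,\big|^2} = O(\sqrt n)\cdot o(\sqrt n) = o(n)$, making the right side of the display tend to $0$. Finally $\|f_n\|_\mathrm{Lip}\le L$ and $\|f_n(0)\|\le \sqrt n C$ bound both $\sqrt{\Mcal_n(\sigma_R^2,f_n)}$ and $\sqrt{\Dsp_n(R,f_n)}$ uniformly in $n$ (using $\ex{\|Z^n\|^2} = n(\ex{|X_1|^2}+\sigma_R^2)$ and $\ex{\|U^n\|^2}=n\ex{|U_1|^2}$), so $|\Dsp_n - \Mcal_n| = |\sqrt{\Dsp_n}-\sqrt{\Mcal_n}|\,(\sqrt{\Dsp_n}+\sqrt{\Mcal_n}) \to 0$.

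\textbf{Second assertion.} By subadditivity of $n\Mcal_n(\sigma_R^2)$ (Fekete), $\Mcal(\sigma_R^2) = \inf_N \Mcal_N(\sigma_R^2)$. Fix $\delta>0$ and choose $N$ with $\Mcal_N(\sigma_R^2) < \Mcal(\sigma_R^2)+\delta$; the block-$N$ minimizer is the conditional expectation $\hat U_N(z)=\ex{U^N\mid Z^N=z}$, which lies in $L^2(P_{Z^N})$, so there is a smooth, compactly supported (hence Lipschitz) $g:\reals^N\to\reals^N$ with $\tfrac1N\ex{\|g(Z^N)-\hat U_N(Z^N)\|^2}<\delta$; the Pythagorean identity for conditional expectation then gives $\tfrac1N\ex{\|U^N - g(Z^N)\|^2} < \Mcal(\sigma_R^2)+2\delta$. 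Extend $g$ to a length-$n$ estimator $g^n$ by tiling it over $\lfloor n/N\rfloor$ disjoint blocks and padding with zeros. Stationarity of $\{(U_n,Z_n)\}$ yields $\Mcal_n(\sigma_R^2,g^n) \to \tfrac1N\ex{\|U^N - g(Z^N)\|^2}$, while $\|g^n\|_\mathrm{Lip} = \|g\|_\mathrm{Lip}$ and $\|g^n(0)\|\le\sqrt n\,\|g(0)\|$ are controlled, so the first assertion applies to the sequence $\{g^n\}_n$ and gives $\limsup_n\Dsp_n(R,g^n) \le \Mcal(\sigma_R^2)+2\delta$. A diagonal argument over $\delta_k\downarrow 0$ — picking $N_k$, $g_k$ and a rapidly increasing $n_k$ and setting $f_n = g_k^{\,n}$ for $n_k\le n< n_{k+1}$ — then produces one sequence $\{f_n\}$ with $\Mcal_n(\sigma_R^2,f_n)\to\Mcal(\sigma_R^2)$ and $\Dsp_n(R,f_n)\to\Mcal(\sigma_R^2)$; the matching lower bounds come from $\Mcal_n(\sigma_R^2,f_n)\ge\Mcal_n(\sigma_R^2)\ge\Mcal(\sigma_R^2)$ and from the first assertion.

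\textbf{Main obstacle.} I expect the delicate step to be the $o(n)$ control of $\ex{\big|\,\|X^n\|-\gamma_n\,\big|^2}$: Assumption~\ref{assumption:source_coding} provides only $L^1$ (not $L^2$) convergence of the empirical second moment, so one cannot go through $(\|X^n\|^2-\gamma_n^2)^2$ directly and must instead argue via $\ex{\|X^n\|}$ and uniform integrability. The only other non-routine point is that the block denoisers have $N$-dependent Lipschitz constants, which the diagonalization absorbs.
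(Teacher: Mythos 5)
Your proposal is correct and follows the same high-level strategy as the paper (compare the two risks via the Lipschitz bound, control the coupling distance using Theorem~\ref{thm:prob_bound}/\ref{thm:Wpbound}, use ergodicity to kill the magnitude-mismatch term, then do a blockwise Lipschitz approximation of the Bayes denoiser). Two technical steps differ, and it is worth recording why.

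For the $o(n)$ bound on $\ex{\bigl|\|X^n\|-\gamma_n\bigr|^2}$, you expand the square, use $\ex{\|X^n\|^2}=\gamma_n^2$ to reduce to $\gamma_n-\ex{\|X^n\|}=o(\sqrt n)$, and then invoke uniform integrability of $\|X^n\|^2/n$ (inherited from the $L^1$ convergence in Assumption~\ref{assumption:source_coding}). This is valid, but your ``main obstacle'' comment slightly overstates the difficulty: the paper dispatches this in one line via the elementary inequality $\bigl|\sqrt a-\sqrt b\bigr|\le\sqrt{|a-b|}$, which gives
\begin{align*}
\ex{\Bigl|\tfrac{1}{\sqrt n}\|X^n\|-\sqrt{\ex{|X_1|^2}}\Bigr|^2}\le \ex{\Bigl|\tfrac1n\|X^n\|^2-\ex{|X_1|^2}\Bigr|}\to 0
\end{align*}
directly from $L^1$ convergence of the empirical second moment, with no need for uniform integrability or the explicit identity $\ex{|\|X\|-\gamma_n|^2}=2\gamma_n(\gamma_n-\ex{\|X\|})$.

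For the second assertion, your route approximates the block-$N$ Bayes estimator $z\mapsto\ex{U^N\mid Z^N=z}$ in $L^2(P_{Z^N})$ by a smooth compactly supported (hence Lipschitz) $g$ and uses the Pythagorean identity to convert the $L^2$ approximation into an MSE bound. The paper instead proves a dedicated Lemma~\ref{lem:Lip_approx} via truncation: conditioning on the event $\{\|U\|\vee\|X\|\le T\}$ produces a conditional-expectation estimator whose Jacobian is $\sigma^{-2}\cov(U,X\mid Z,\cdot)$, yielding an explicit Lipschitz bound $T^2/\sigma^2$. Both approximation schemes work; the paper's lemma is self-contained and gives a quantitative Lipschitz constant, whereas your density argument is softer but equally sufficient here. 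Your explicit diagonalization over $\delta_k\downarrow0$ is an honest spelling-out of what the paper's ``$\eps$ can be chosen arbitrarily small'' implicitly requires when assembling a single sequence $\{f_n\}$.

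One typographical note: the displayed identity in the paper's proof, $\Mcal_n(\sigma^2,f_n)=\tfrac1n\lfloor n/N\rfloor\Mcal_n(\sigma^2,g)+\cdots$, should read $\Mcal_N(\sigma^2,g)$ in the first term; your version of this step treats it correctly.
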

\begin{proof}
Set $\gamma = \sqrt{n}\sqrt{\ex{|X_1|^2}}$, $M = \lceil 2^{ n R} \rceil$, and $(\rho,\sigma)$ as in \eqref{eq:rhosig}. Note that $\sigma^2 = \sigma_R^2$. Following the same steps as in the proof of Proposition~\ref{prop:Wp_cont}, we have
\begin{align}
\left| \sqrt{ \Dsp_n(R,f_n) } -\sqrt{ \Mcal_n(\sigma^2_R,f_n)} \right|
& \le \frac{ L \cdot W_2(P_{Y^n},P_{Z^n}) }{\sqrt{n}}.
\end{align}
By Theorem~\ref{thm:Wpbound}, the normalized Wasserstein distance can be upper bound as
\begin{align*}
\frac{W_2(P_{Y^n},P_{Z^n})}{\sqrt{n}} & \le \left(  \ex{ \left| \frac{1}{\sqrt{n}} \|X^n\| -\sqrt{\ex{ |X_1|^2}} \right|^2}\right)^{1/2}  \\
& \qquad +  \frac{\sqrt{2} C_R}{\sqrt{2^{2R}-1}} \frac{ \sqrt{\ex{ |X_1|^2}}}{\sqrt{n}},
\end{align*}
where $C_R$ is a constant that depends only on $R$. The second term in this bound converges to zero at a rate $1/\sqrt{n}$. Combining the inequality $|\sqrt{a}- \sqrt{b}| \le \sqrt{|a-b|}$ with the assumption that $\{X_n\}$ is second order ergodic, one finds that the first term also converges to zero. Putting everything together, we conclude that
\begin{align}
\lim_{n \to \infty} \left| \sqrt{ \Dsp_n(R,f_n) } -\sqrt{ \Mcal_n(\sigma^2_R,f_n)} \right|  = 0. 
\end{align}

To prove that this comparison holds without the square roots, it is sufficient to show that $\Dsp_n(R,f_n)$ and $\Mcal_n(R,f_n)$ are bounded uniformly with respect to $n$. To this end, we can use the triangle inequality and the assumptions on $f_n$ to write:
\begin{align*}
 \|U^n - f_n(Y^n)\| & \le  \|U^n -  f_n(0)\|  + \| f_n(0)\|+\|f(Y^n)\| \\
 & \le \|U^n\| +\sqrt{n} C + L \|Y^n\|.
\end{align*}
Combining this bound with the assumptions on $U^n$ and $Y^n$ establishes that $\Dsp_n(R,f_n)$ is bounded uniformly, and the same approach also works for $\Mcal_n(\sigma^2_R, f_n)$.

To prove the second part, we will show that for each $\eps> 0$, there exists a sequence of  estimators $f_n$ satisfying $\sup_n \|f_n\|_\mathrm{Lip}<\infty$ and  $\limsup_{n \to \infty}  \cM(\sigma^2, f_n) \le \cM(\sigma^2) + \eps$. The existence of the limit in the definition of $M(\sigma^2)$ means that for each $\eps > 0$,  there exists an integer $N$ such that $| \Mcal_n(\sigma^2) - M(\sigma^2) | \le \eps$ for all $n \ge N$. By Lemma~\ref{lem:Lip_approx} in the Appendix, there exists a Lipschitz continuous function $g: \reals^N \to \reals^N$ such that $| \Mcal_n(\sigma^2) - \Mcal_n(\sigma^2 ,g)| \le \eps $. For $n \ge N$, let $f_n : \reals^n \to \reals^n$ be defined by applying $g$ to the first $\lfloor n/N \rfloor$ successive length-$N$ blocks of $Z^n$  and setting any remaining entries to zero. Then, we have  $\|f_n\|_\mathrm{Lip} = \|g\|_\mathrm{Lip}$  and
\begin{align*}
& \Mcal_n(\sigma^2, f_n) \\
& \qquad = \frac{1}{n} \lfloor n/N \rfloor \Mcal_n(\sigma^2, g) + \frac{1}{n}\left (n -  \lfloor n/N \rfloor \right) \ex{ |U_1|^2}.
\end{align*}
Putting everything together, we have $|\Mcal_n(\sigma^2, f_n) - M(\sigma^2)| \le 3 \eps$ for all $n$ large enough. As $\eps$ can be chosen arbitrarily small, the proof is complete. 
\end{proof}

The significance Theorem~\ref{thm:indirect_source_coding} is that it provides a link between the problem of estimation from compressed data, which is often difficult to study directly, and the better-understood problem of estimation in Gaussian noise. We emphasize that the assumptions on the source and data are quite general, particularly in comparison to many of the existing results in the literature. 

Compared to optimal encoding schemes that attain the indirect distortion-rate function $D(R)$, a useful property of random spherical coding is that it can be implemented without any knowledge of the underlying source distribution. Therefore, the coding scheme described in this paper can be employed in typical data acquisition situations where the distribution of the data and the source of interest is learned \emph{after} the data are collected and quantized.

\subsection{Universality of Linear Estimation}
We now consider the performance of linear estimators.
Given a bitrate $R >0$, define the scalar
\begin{align}
\alpha_R =  \left(1 - 2^{-2R} \right) \frac{ \ex{ U_1 X_1}}{ \ex{|X_1|^2}}. \label{eq:alpha_R}
\end{align}
A standard calculation reveals that under the AWGN model, the MSE of the linear estimator $f(y) = \alpha_R y$ is independent of the problem dimension and is given by 
\begin{align}
\Mcal_n(\sigma^2_{R},f) = \frac{1}{n} \ex{ \| U^n - \alpha_R Z^n\|^2}
 = D_G(R),
\end{align}
where we recall that $D_G(R)$ of \eqref{eq:Gaussian_iDRF} is the  distortion-rate function associated with a zero-mean Gaussian source. In view of Theorem~\ref{thm:indirect_source_coding}, this correspondence between the Gaussian distortion-rate function and the MSE of linear estimators in the AWGN model implies an achievable result for random spherical codinig.

\begin{prop}\label{prop:Dsp_lin}
Let $\{(U_n,X_n)\}$ be a process satisfying Assumption~\ref{assumption:source_coding}. For each integer $n$, let $Y^n$ be the output of a bitrate-$R$ random spherical code with input $X^n$ and squared magnitude $n\ex{|X_1|^2}/ (2^{2R}-1)$. Then, 
\begin{align}
\lim_{n \to \infty} \frac{1}{n} \ex{ \| U^n - \alpha_R Y^n\|^2} = D_G(R),
\end{align}
where $\alpha_R$ is given by \eqref{eq:alpha_R}. 
\end{prop}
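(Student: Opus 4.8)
The plan is to read Proposition~\ref{prop:Dsp_lin} off Theorem~\ref{thm:indirect_source_coding} by specializing to the sequence of linear estimators $f_n(y) = \alpha_R y$, $y \in \reals^n$. First I would verify that $\{f_n\}$ meets the hypotheses of Theorem~\ref{thm:indirect_source_coding}: since $f_n$ is linear we have $\|f_n(u) - f_n(v)\| = |\alpha_R|\,\|u-v\|$, so $\|f_n\|_\mathrm{Lip} = |\alpha_R|$ is a constant independent of $n$, and $f_n(0) = 0$, so the growth condition $\|f_n(0)\| \le \sqrt{n}\,C$ holds with $C = 0$. Hence the first conclusion of Theorem~\ref{thm:indirect_source_coding} applies and gives
\[
\lim_{n\to\infty} \left| \Dsp_n(R,f_n) - \Mcal_n(\sigma_R^2, f_n) \right| = 0, \qquad \sigma_R^2 = \frac{\ex{|X_1|^2}}{2^{2R}-1}.
\]

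Next I would identify the two quantities inside this limit. By the definition \eqref{eq:Dspnf} of $\Dsp_n$ together with the squared-error distortion, $\Dsp_n(R, f_n) = \ex{d(U^n, \alpha_R Y^n)} = \frac1n \ex{\|U^n - \alpha_R Y^n\|^2}$, which is precisely the left-hand side of the claimed identity. For the companion term, the standard calculation recorded immediately before the proposition shows that the matched linear estimator $f(y)=\alpha_R y$ satisfies $\Mcal_n(\sigma_R^2, f_n) = D_G(R)$ for every $n$; I would include this computation for completeness, expanding $\ex{(U_1 - \alpha_R Z_1)^2}$ using $\ex{U_1 Z_1} = \ex{U_1 X_1}$ (the added Gaussian noise is independent of $U_1$), $\ex{Z_1^2} = \ex{X_1^2} + \sigma_R^2 = \ex{X_1^2}/(1 - 2^{-2R})$, and the value of $\alpha_R$ from \eqref{eq:alpha_R}, with stationarity making the per-coordinate value the same for all $i$ and in particular dimension-free. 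Substituting both identifications into the displayed limit yields $\lim_{n\to\infty} \frac1n \ex{\|U^n - \alpha_R Y^n\|^2} = D_G(R)$, which is the assertion.

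I do not anticipate a genuine obstacle: once Theorem~\ref{thm:indirect_source_coding} and Proposition~\ref{prop:Wp_cont} are in hand, the proposition is a one-line specialization plus a routine mean-square computation. The only points deserving a little care are (i) matching the magnitude normalization of the spherical code to the one used in Theorem~\ref{thm:indirect_source_coding}, so that the companion AWGN model has noise variance $\sigma_R^2 = \ex{|X_1|^2}/(2^{2R}-1)$ as claimed; and (ii) checking that $\Mcal_n(\sigma_R^2, f_n) = D_G(R)$ holds exactly for every finite $n$ rather than only asymptotically, since it is this exact identity, combined with $\Dsp_n(R,f_n) = \Mcal_n(\sigma_R^2,f_n) + o(1)$, that produces the stated limit.
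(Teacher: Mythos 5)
Your proposal is correct and follows exactly the route the paper intends: Proposition~\ref{prop:Dsp_lin} is presented as an immediate specialization of Theorem~\ref{thm:indirect_source_coding} to the linear estimator $f_n(y)=\alpha_R y$, combined with the dimension-free identity $\Mcal_n(\sigma_R^2,f_n)=D_G(R)$ computed just before the statement. Your two points of care are well taken — in particular, the stated ``squared magnitude $n\,\mathbb{E}[|X_1|^2]/(2^{2R}-1)$'' should read $n\,\mathbb{E}[|X_1|^2]/(1-2^{-2R})$ to match the normalization used throughout Section~\ref{sec:source_coding}, so that the companion AWGN noise variance is indeed $\sigma_R^2=\mathbb{E}[|X_1|^2]/(2^{2R}-1)$.
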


Applied to the special case of direct source coding $X^n=U^n$, Proposition~\ref{prop:Dsp_lin} recovers the results in \cite{sakrison1968geometric} and \cite{lapidoth1997role}, which showed that squared error distortion of a (properly scaled) random spherical code depends only on the second-order statistics of the source and is equal to the Gaussian distortion-rate function. The contribution of Proposition~\ref{prop:Dsp_lin} is to show that this result carries over naturally to the indirect source coding setting. Moreover, if $\{(U_n, X_n)\}$ are i.i.d.\ zero-mean Gaussian, then we have the equivalence:
\begin{align}\label{eq:equivalence}
    D(R) = \Mcal(\sigma^2_R) = D_G(R).
\end{align}
We note that, in general, 
codebooks approaching the optimal trade-off between bitrate and MSE described by $D(R)$ depend on the joint distribution of $\{(U_n,X_n)\}$. This is because such codebooks essentially encode the sequence obtained by estimating $U^n$ from $X^n$ \cite{1056251,kipnis2021rate}, i.e., estimation precedes encoding in this case. When $U^n$ and $X^n$ are i.i.d.\ and jointly Gaussian, this estimation is obtained by multiplying $X^n$ by $\alpha_R$, and there is essentially no difference if this multiplication is performed pre- or post-encoding. To summarize, for i.i.d. Gaussian and zero mean 
$\{(U_n, X_n)\}$, the equality $D(R) = \Mcal(\sigma^2_R)$ is due to two factors: (1) The optimal estimator is a scalar multiple of the data, and (2) random spherical coding is optimal for encoding Gaussian sources. \par

\subsection{Non-linear Estimation}
Next, we consider the performance of non-linear estimators when the source and the data are non-Gaussian. Suppose that the source and the data are memoryless, that is the pairs $(U_n,X_n)$ are i.i.d.\ from a distribution $P_{U,X}$ with finite second moments. Under this assumption, the indirect distortion-rate function $D(R)$ can be expressed as \cite{DobrushinTsybakov,berger1971rate}
\begin{align}
D(R) = \min_{ I(X; \hat{U}) \le R} \ex{(U - \hat{U})^2},
\end{align}
where the minimum is over all distributions on $(X,\hat{U})$ such that $X \sim P_X$,
and $I(X;\hat{U}) \le R$. Noting that 
\[
\min_{ I(X; \hat{U}) \le R} \ex{(U - \hat{U})^2} = \min_{ I(X; \hat{X}) \le R} 
\ex{d(X,\hat{X})},
\]
where $d(x,\hat{x}) \coloneqq \ex{(U-\hat{U})^2 \mid X = x, \hat{U}=\hat{x}}$, $D(R)$ can be approximated numerically using \cite{Blahut}.\par
In the setting of the AWGN model, the memoryless assumption means that the problem of estimating $U^n$ from  $Z^n$ decouples into $n$ independent estimation problems, and the minimum MSE function is given by
\begin{align}
    \Mcal(\sigma^2) = \ex{ (U - \ex{ U \mid  Z})^2},  \quad Z = X + \sigma W,
\end{align}
where $(U,X) \sim P_{U,X}$ and $W \sim \normal(0,1)$ are independent. This expression can be approximated numerically using standard techniques. 

An interesting special case of the indirect source coding problem occurs when the data is an AWGN corrupted version of the source, that is
\begin{align}
X = U + \eps W', \label{eq:Xawgn}
\end{align}
with $U \sim P_U$ independent of  $W' \sim \normal(0,1)$.  In this case, the Gaussian noise in the data can be combined with the independent Gaussian noise in the AWGN model such that
 \begin{align}
Z = U + \sqrt{ \eps^2 + \sigma^2} W'',
\end{align}
 where $U\sim P_{U}$ independent of $W'' \sim \normal(0,1)$. In Figure \ref{fig:noisy_BPSK}, we provide a comparison of the indirect distortion-rate function $D(R)$ and the upper bound on the distortion obtained using random spherical coding $\Mcal(\sigma_R^2)$ in the setting where $U$ is uniform on $\{-1,1\}$ and $X$ is drawn according to \eqref{eq:Xawgn}. For comparison, we also plot the upper bound $D_G(R)$ corresponding to linear estimation, as well as the asymptotes of all MSE functions as the noise variance $\epsilon$ vanishes.  

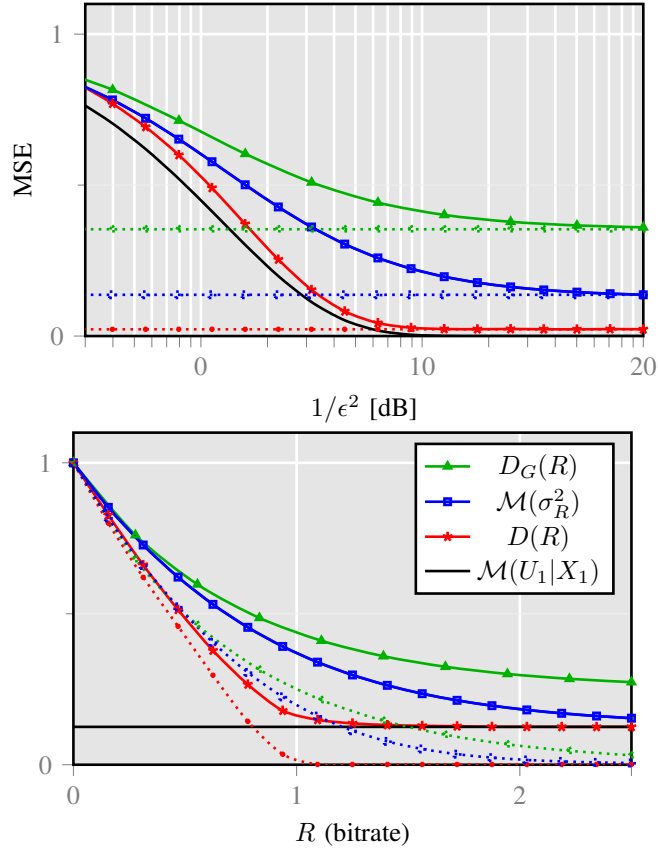
\begin{figure*}
    \centering
    \begin{tikzpicture}
    \begin{semilogxaxis}[
    legend pos=outer north east,
    width=9cm, height=6cm,
    xmin = 0.3, xmax = 100, 
    ymin = 0, ymax=1.1, 
    ytick = {0, 1},
    ylabel = {MSE},
    xlabel = {$1/\epsilon^2$  [dB]},
    yticklabels = {0,1},
    xtick = {0.1,0.2,0.3,0.4,0.5,0.6,0.7,0.8,0.9,1,2,3,4,5,6,7,8,9,10,20,30,40,50,60,70,80,90,100},
    xticklabels={,,,,,,,,,0,,,,,,,,,10,,,,,,,,,20}, 
    line width=1.0pt,
    mark size=1.5pt,
    ]
    \def\R{0.75}
    \addplot[color = blue, solid, smooth, mark = square, mark size=1.2pt, domain=0.2:100] table [x=snr, y=Drsc, col sep=comma] {./PlotData/noisy_BPSK_SD.csv}; 
    
    \addplot[color = red, smooth, mark = star, mark size=1.9pt] table [x=snr, y=Dind, col sep=comma] {./PlotData/noisy_BPSK_SD.csv}; 
    
    \addplot[color = red, dotted, smooth, mark = star, mark size=1.9pt] table [x=snr, y=Dshn, col sep=comma] {./PlotData/noisy_BPSK_SD.csv}; 
    
    \addplot[color = black, smooth] table [x=snr, y=mmse, col sep=comma] {./PlotData/noisy_BPSK_SD.csv}; 
    
    \addplot[color = black!30!green, mark = triangle, mark size=1.5pt, smooth, domain=0.2:100, samples = 10] {1 / (1 + x * (1 - 2^(-2 * \R))/(1 + x * 2^(-2*\R) ) ) }; 
    
    \addplot[color = black!30!green, mark = triangle, mark size=1.5pt, dotted, smooth, domain=0.1:100, samples = 21] {2^(-2*\R)}; 
    
    \addplot[color = blue, solid, smooth, mark size=1.4pt, domain=0.01:100] table [x=snr, y=Drsc, col sep=comma] {./PlotData/noisy_BPSK_SD.csv};
    
    \addplot[color = blue, dotted, smooth, mark = square, mark size=1.2pt, domain=0.01:100] table [x=snr, y=Dinf, col sep=comma] {./PlotData/noisy_BPSK_SD.csv}; 
\end{semilogxaxis}
\end{tikzpicture}
\begin{tikzpicture}
    \begin{axis} [
    legend pos = north east,
    width=9cm, height=6cm,
    xmin = 0, xmax = 2.5, 
    ymin = 0, ymax=1.1, 
    ytick = {0, 1},
    xlabel = {$R$ (bitrate)},
    yticklabels = {0,1},
    xtick = {0,1,2,2.5},
    xticklabels={0,1,2}, 
    line width=1.0pt,
    mark size=1.5pt,
    ]
    \def\S{3}
    
    \addplot[color = black!30!green, mark = triangle, mark size=1.5pt, smooth, domain=0:2.5, samples = 10] {1 / (1 + \S * (1 - 2^(-2 * x))/(1 + \S * 2^(-2*x) ) ) }; \addlegendentry{$D_G(R)$};
    
    \addplot[color = blue, solid, smooth, mark = square, mark size=1.2pt, domain=0:2.5] table [x=R, y=Drsc, col sep=comma] {./PlotData/noisy_BPSK_RD.csv}; \addlegendentry{$\Mcal(\sigma_R^2)$};
    
    \addplot[color = red, smooth, mark = star, mark size=1.9pt] table [x=R, y=Dind, col sep=comma] {./PlotData/noisy_BPSK_RD.csv}; 
    \addlegendentry{$D(R)$};

    \addplot[color = black, smooth] table [x=R, y=mmse, col sep=comma] {./PlotData/noisy_BPSK_RD.csv}; \addlegendentry{$\Mcal(U_1|X_1)$};

    \addplot[color = red, dotted, smooth, mark = star, mark size=1.9pt] table [x=R, y=Dshn, col sep=comma] {./PlotData/noisy_BPSK_RD.csv}; 

    \addplot[color = black!30!green, mark = triangle, mark size=1.5pt, dotted, smooth, domain=0:2.5, samples = 10] {2^(-2*x)}; 
    
    \addplot[color = blue, solid, smooth, mark size=1.4pt, domain=0:2.5] table [x=R, y=Drsc, col sep=comma] {./PlotData/noisy_BPSK_RD.csv};
    
    \addplot[color = blue, dotted, smooth, mark = square, mark size=1.2pt, domain=0:2.5] table [x=R, y=Dinf, col sep=comma] {./PlotData/noisy_BPSK_RD.csv}; 
\end{axis}
\end{tikzpicture}
    \caption{Mean square error (MSE) in estimating an i.i.d. signal equiprobable on $\{-1,1\}$ from a bitrate-$R$ encoding of its AWGN-corrupted version. Left Panel: MSE versus noise variance $\epsilon^2$ with a fixed encoding bitrate $R = 1$. Right Panel: MSE versus bitrate $R$ with noise variance $\epsilon^2 = 1/3$. $\Mcal(\sigma_R^2)$ is achievable using a random spherical code followed by a scalar Bayes estimator. $D_G(R)$ is achievable using random spherical coding followed by a scalar linear estimator. $D(R)$ is the indirect distortion-rate function corresponding to the optimal encoding scheme. The dashed lines indicate the asymptotic MSE as $\epsilon \to 0$. Also shown is $\Mcal(U_1|X_1)$, which is the minimal MSE in estimating the signal from its corrupted version corresponding to the limit $R\to \infty$. 
    }
    \label{fig:noisy_BPSK}
\end{figure*}

\section{Proof of Main Result}
\label{sec:proof:coupling}
\label{sec:main_proof}

The proof of Theorem~\ref{thm:coupling} requires several lemmas.
\begin{lem}\label{lem:sphere_decomp}  Suppose that $V$ is distributed uniformly on the unit sphere in $\reals^n$ with $n \ge 2$. For any $x \in \reals^n\backslash\{0\}$, the distribution on $V$ can be decomposed as
\begin{align}
V  = G \frac{x}{\|x\|} + \sqrt{ 1- G^2}H 
\end{align}
where $G = \langle x,V \rangle /\|x\|$ is a random variable supported on $[-1,1]$ with complementary cumulative distribution function 
$\pr{ G \ge g} = Q_n(g)$ of \eqref{eq:Qn},  
and $H$ is an independent random vector distributed uniformly on the set $\{ h \in \reals^n \, : \, \|h\| =1 \text{ and } \langle x, h \rangle =0\}$.
\end{lem}
\begin{proof}
By the orthogonal invariance of the distribution on $V$, we may assume without loss of generality that $x$ is a unit vector of the form $x = (1, 0, \dots 0)$. Then, $G = V_1$ and $H = (0, V_2, \dots, V_n)/\sqrt{ \sum_{i=2}^n V_i^2}$. The joint distribution of $(G,H)$ follows from the joint distribution on the entries of a random spherical vector~\cite[Eq.~(3)]{stam:1982}.
\end{proof}

\begin{lem}\label{lem:PYx}
For $n \ge 2$, let $Y$ be the output of an $(n,M)$-random spherical code with input $x \in \reals^n\backslash \{0\}$ and magnitude $\rho$. The distribution of $Y$ can be decomposed as 
\begin{align}
Y =  \rho  \left(  \frac{x}{\|x\|} \cos(\alpha^*) + \sqrt{ 1- S^2}H \right)
\end{align}
where $\cos(\alpha^*)$ has the cumulative distribution function \eqref{eq:alpha_CDF},
and $H$ is an independent random vector  distributed uniformly on the set $\{ h \in\reals^n \, : \, \|h\| = 1, \text{ and }  \langle x, h \rangle = 0\}$.
\end{lem}
\begin{proof}
For each code word $C(i)$ we apply the decomposition in Lemma~\ref{lem:sphere_decomp} to obtain
\[
C(i)  =  G(i)  + \sqrt{1 - G(i)^2} H(i) ,
\]
where $G(i) = \langle x, C(i) \rangle / \|x\| \|C(i)\|$ is the cosine similarity of the $i$-th codeword. Recall that the index $i^*$ corresponds to the code word that maximizes the cosine similarity $
\cos(\alpha^*) = G(i^*) \coloneqq \max \left\{ G(1), \dots G(M) \right\}$. 
Therefore, the distribution of $S$ follows from the fact that $G(1), \dots G(M)$ are i.i.d.\ with complementary cumulative distribution function given by \eqref{eq:Qn}. Furthermore, because $i^*$ depends only on the terms $G(1), \dots G(M)$, it follows from Lemma~\ref{lem:sphere_decomp} that $H \coloneqq H(i^*)$ is independent of $\alpha^*$ and uniform on the subset of the unit sphere that is orthogonal to $x$. Noting that $Y= \rho C(i^*)$ completes the proof. 
\end{proof}

\begin{lem}\label{lem:PZx}
Suppose that $W$ is a standard Gaussian vector on $\reals^n$ with $n\ge 2$. For any  $x \in \reals^n\backslash\{0\}$ the random vector $Z= x + \sigma W $ can be decomposed as
\begin{align}
Z =  x +  \sigma A \frac{x}{\|x\|} + \sigma B H 
\end{align}
where $(A,B,H)$ are independent, $A \sim \normal(0,1)$ has a standard Gaussian distribution, $B \sim \chi_{n-1}$ has a chi distribution with $n-1$ degrees of freedom, and $H$ is distributed uniformly on the set $\{ h \in\reals^n \, : \, \|h\| = 1, \text{ and }  \langle x, h \rangle = 0\}$.
\end{lem}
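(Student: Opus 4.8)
The plan is to reduce to the canonical case $x = \|x\|\, e_1$ with $e_1 = (1,0,\dots,0)$ using orthogonal invariance, and then to invoke the standard radial--angular decomposition of an isotropic Gaussian vector, exactly paralleling the proof of Lemma~\ref{lem:sphere_decomp}.

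First I would make the reduction precise. Let $O$ be an orthogonal matrix with $O e_1 = x/\|x\|$. Since $W \stackrel{d}{=} O W$, we have $x + \sigma W \stackrel{d}{=} x + \sigma O W$. If $W$ admits the claimed decomposition relative to $e_1$, i.e.\ $W = A e_1 + B H_0$ with $H_0$ uniform on $\{h:\|h\|=1,\ \langle e_1,h\rangle = 0\}$ and $(A,B,H_0)$ independent as stated, then $O W = A\,\frac{x}{\|x\|} + B\,(O H_0)$, and $O$ maps $\{h:\|h\|=1,\ \langle e_1,h\rangle=0\}$ bijectively onto $\{h:\|h\|=1,\ \langle x,h\rangle=0\}$ while preserving uniformity, so $H := O H_0$ has the required distribution and remains independent of $(A,B)$. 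Hence it suffices to treat $x = \|x\|\, e_1$.

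In the canonical case I would simply set $A = W_1 \sim \normal(0,1)$, $W_\perp = (0,W_2,\dots,W_n)$, $B = \|W_\perp\|$, and $H = W_\perp/B$ (defined arbitrarily on the null event $B=0$). Then $W = A e_1 + B H = A\,\frac{x}{\|x\|} + B H$ with $\|H\|=1$ and $\langle x,H\rangle = 0$, so $Z = x + \sigma W$ has the stated form. The joint law of $(A,B,H)$ follows from two standard facts: the coordinates of $W$ are independent, so $A = W_1$ is independent of $(W_2,\dots,W_n)$ and hence of $(B,H)$; and the polar decomposition of the isotropic Gaussian $(W_2,\dots,W_n)$ on $\reals^{n-1}$ gives $B \sim \chi_{n-1}$ independent of the direction $H$, which is uniform on the unit sphere of the hyperplane orthogonal to $e_1$ --- this is the same fact invoked for the uniform-on-sphere case in Lemma~\ref{lem:sphere_decomp}; see \cite{stam:1982}.

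There is no real obstacle here: every assertion reduces to independence of the coordinates of $W$ together with the radial--angular decomposition of an isotropic Gaussian. The only point deserving minor care is verifying that the reduction to $x = \|x\|\, e_1$ carries the distribution of $H$ correctly, which is handled by orthogonal invariance as above.
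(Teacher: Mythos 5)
Your proposal is correct and follows essentially the same route as the paper: reduce to $x = \|x\|\,e_1$ by orthogonal invariance, set $A = W_1$, $B = \|(W_2,\dots,W_n)\|$, $H = (0,W_2,\dots,W_n)/B$, and conclude via coordinate independence together with the polar decomposition of the isotropic Gaussian $(W_2,\dots,W_n)$. The only difference is that you spell out the orthogonal-invariance reduction more explicitly, which the paper leaves implicit.
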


\begin{proof}
By the orthogonal invariance of the Gaussian distribution on $W$, we may assume without loss of generality that $x$ is a unit vector of the form $x = (1,0, \dots, 0)$. Letting $A =W_1$, $B = \sqrt{ \sum_{i=2}^n W_i^2}$, and $H = (0, W_2,\dots, W_n)/B$ yields $W =  A x/ \|x\| + \ B H $. By construction, $A$ is a standard Gaussian variable that is independent of $W_2, \dots, W_n$. The distribution of $(B,H)$ follows from the fact that $B (H_2, \dots, H_n)$ is the polar decomposition of the $(n-1)$-dimensional standard Gaussian vector  $(W_2, \dots W_n)$.
\end{proof}


Using the characterizations of $Y$ and $Z$ given in Lemma~\ref{lem:PYx} and Lemma~\ref{lem:PZx}, respectively, we see that for every $x \in \reals^{n} \backslash \{0\}$, there exists a coupling on $(Y,Z)$ such that 
\begin{align*}
Y &= x + \left( \rho \cos(\alpha^*)  - \|x\| \right) \frac{x}{\|x\|} +  \rho \sin(\alpha^*) H,\\
Z &=  x +  \sigma A \frac{x}{\|x\|} + \sigma B H,
\end{align*}
where $(A,B,H,\alpha^*)$ are independent. By the orthogonality of $x$ and $H$, the magnitude of the difference between $Y$ and $Z$ depends only on the tuple $(A,B,\alpha^*)$ and is given by \eqref{eq:coupling}.

\section{Conclusions
\label{sec:conclusions}}
We considered the problem of estimating an underlying signal or parameter from the lossy compressed version of another high dimensional signal. For compression codes defined by a random spherical code of bitrate $R$, we showed that the distribution of the output codeword is close in Wasserstein distance to the conditional distribution of the output of an AWGN channel with SNR $2^{2R}-1$. This equivalence between the noise associated with lossy compression and an AWGN channel allows us to adapt existing techniques for inference from AWGN-corrupted measurements to estimate the underlying signal from the compressed measurements, as well as to characterize their asymptotic performance. \par
We demonstrated the usefulness of this equivalence by deriving novel expressions for the achievable risk in various source coding and parameter estimation settings. These include bitrate-constrained sparse parameter estimation using soft thresholding, bitrate-constrained parameter estimation in high-dimensional linear models, and indirect source coding with linear and non-linear decoders. In each of these settings, our results yielded achievable MSE and provided the equivalent noise level required to tune the estimator to attain this MSE.\par
We believe that the characterization of lossy compression error developed in this paper can be useful in numerous important cases aside from the ones we explored. Examples of such cases include hypothesis testing based on compressed data, signal estimation in distributed lossy compression settings, and the study of convergence rates and accuracies of first-order optimization procedures employing gradient compression. 

\section*{Acknowledgments}
 The authors wish to thank Cynthia Rush, David Donoho, and Robert Gray for helpful discussions and comments. We are also grateful for the two anonymous reviewers that encouraged us to derive a much stronger version of the main results than we initially reported. The work of G. Reeves was supported in part by funding from the NSF under Grant No.~1750362. The work of A. Kipnis was supported in part by funding from the Koret Foundation and the NSF under Grant No.~DMS-1816114.

\ifdefined\PROOFS

\appendices

{
\section{
\label{sec:Proof2_3}
Proofs of Theorem~\ref{thm:DeltaR} and \ref{thm:Delta_beta}
}

The proof of Theorems~\ref{thm:DeltaR} and \ref{thm:Delta_beta} require a characterization of the moments of the random variable
\begin{align}
    \Delta \coloneqq \frac{1}{\sigma} \sqrt{  \left(\rho \cos(\alpha^*) - \gamma - \sigma A \right)^2 + \left( \rho \sin(\alpha^*) - \sigma B \right)^2} 
\end{align}
where $(\rho, \sigma, \gamma)$ are deterministic parameters and  $(\alpha^*, A, B)$ are independent random variables whose distributions are described in Theorem~\ref{thm:coupling}. 

Given $\alpha \in (0, \pi/2)$ let 
\begin{align}
    \rho = \gamma \sec \alpha, \qquad \sigma =  \frac{ \gamma \tan \alpha}{ \sqrt{n}}.
\end{align}
Evaluating $\Delta$ with these values and then using the triangle inequality as well as basic trigonometric identities leads to
\begin{align}
\label{eq:proof:main:Delta}
    \Delta &\le \Delta_1 + \Delta_2,  
\end{align}
where 
\begin{align}
    \Delta_1 & \coloneqq 2 \sqrt{n}\csc(\alpha)  \left|\sin\left( \frac{ \alpha^* - \alpha}{2} \right) \right|, \\
    \Delta_2 & \coloneqq
    \sqrt{ A^2 + (B - \sqrt{n})^2}.
\end{align}
The term $\Delta_2$ is sub-Gaussian with mean and variance parameter independent of $n$. An estimate for its sub-Gaussian constant is provided in  Lemma~\ref{lem:Delta2}.  For the term, $\Delta_1$, we use the following result, which is proved in Appendix~\ref{proof:lem:alpha_bound}.

\begin{lem}\label{lem:alpha_bound}
Suppose that $M = M_{\alpha}(n) \coloneqq \sqrt{n}(\csc \alpha)^{n-1}$ for $\alpha \in (0, \pi/2)$. Then, for $p \ge 1$, 
\begin{align}
\label{eq:lem:alpha_bound}
   & \ex{ \left|\sin\left( \frac{ \alpha^* - \alpha}{2} \right) \right|^p}^{1/p}  \\
   & \qquad \le C \frac{ \tan(\alpha) [ \log(n \wedge \sec(\alpha) ) + p ]}{n} + 2^{-M/p}  \end{align}
where $C$ is a numerical constant. 
\end{lem}
\begin{proof}[Proof of Theorem~\ref{thm:DeltaR}] 
In view of \eqref{eq:proof:main:Delta} and the fact that $\Delta_2$ is sub-Gaussian with a constant that does not depend on on $n$ all that remains is to establish the desired upper bound on the moments of  $\Delta_1$. Note that if $p \ge n$ then this term is bounded almost surely according to 
\[
\Delta_1 \leq 2 \csc(\alpha) \sqrt{n} \leq 2 \csc(\alpha) \sqrt{p}, 
\]
where $2 \csc (\alpha)$ depends only on $R$. The remainder of the proof focuses on the case $1 \le p \le n$.

Recall that the specification in \eqref{eq:rhosig} corresponds to the choice $\alpha = \arcsin(2^{-R})$. 
Define 
\begin{align}
    N_R = \min\left\{ n \in \mathbb{N} \, : \, 2^{R +1} \le \sqrt{n} \le 2^{\frac{n+1}{2} R} \right\}
\end{align}
For $n \ge N_R$ it can be verified that there exists a unique value $\alpha_n \in [\alpha, \pi/2)$ such that $M = \sqrt{n} \csc(\alpha_n)^{(n-1)}$  and $ \sin (\alpha_n) \le \sqrt{ \sin (\alpha)}$. 
Noting the the sin function is Lipschitz and non-decreasing on $[0, \pi/2]$ we can write
\begin{align}
\left|\sin\left( \frac{ \alpha^* - \alpha}{2} \right) \right| \le \left|\sin\left( \frac{ \alpha^* - \alpha_n}{2} \right) \right| + \sin\left( \frac{ \alpha_n - \alpha}{2} \right). \label{eq:sin_alpha_R_decomp}
\end{align}
The second term in \eqref{eq:sin_alpha_R_decomp} is deterministic and satisfies 
\begin{align}
   \sin\left( \frac{ \alpha_n - \alpha}{2} \right) & = \cos\left( \frac{ \alpha_n - \alpha}{2} \right)  \frac{ \sin(\alpha_n) - \sin(\alpha)}{ \cos(\alpha_n)  + \cos(\alpha)}\\
   & \le \frac{ \tan(\alpha_n)}{2} \left(1 - \frac{ \sin(\alpha)}{ \sin(\alpha_n)} \right)\\
& \le \frac{ \tan(\alpha_n)}{2} \log \left( \frac{ \sin(\alpha_n)}{ \sin(\alpha)} \right)\\
& =  \frac{ \tan(\alpha_n)}{2}  \left[ \frac{\frac{1}{2} \log n - \log M }{n-1}  + R \log(2) \right]\\
& \le      \frac{ \tan(\alpha_n) \log n}{2 n} . 
\end{align}
To bound the moments of the first term in \eqref{eq:sin_alpha_R_decomp} we can  use  Lemma~\ref{lem:alpha_bound}. Finally, recalling that $\sin(\alpha_n) \le \sqrt{ \sin \alpha}$, it follows that 
\begin{align}
    \tan(\alpha_n) &= \frac{1}{ \sqrt{ \sin(\alpha_n)^{-2} - 1} } \\
    &\le \frac{1}{ \sqrt{ \sin(\alpha)^{-1} - 1} }.
\end{align}
In view of \eqref{eq:lem:alpha_bound} and $n >p$, it follows that
\begin{align}
    \ex{ \Delta_1^p}^{1/p} & \le C_R \left(  \frac{  [ \log(n) + p ]}{\sqrt{n} } + \sqrt{n} 2^{-M/p}  \right)  \\
    & \le C_R \left( 1 + \sqrt{p} + \sqrt{n} 2^{-M/p} \right) .
\end{align}
Finally, for the term $\sqrt{n}2^{-M/p}$,
recalling that $M_n = \lceil 2^{nR} \rceil$, we see that for $p \le \sqrt{n}$, 
\begin{align}
 \sqrt{n}2^{-M/p} & \le \sqrt{n} 2^{-2^{nR}/p} \le \sqrt{n} 2^{-2^{nR}/n} \le C''_R
\end{align}
for some positive constant $C''_R$. This completes the proof of Theorem~\ref{thm:DeltaR}. 
\end{proof}

\begin{proof}[Proof of Theorem~\ref{thm:Delta_beta}] The specification given \eqref{eq:rhosig_tilde} corresponds to the choice 
\begin{align}
\alpha  = \arcsin \left( M^{1/(n-1)} \right)
\end{align}
Under the assumption $M \ge M_\beta(n)$, there exists $\tilde{\alpha} \in (\alpha,\beta]$ such that $M = \sqrt{n} (\csc(\tilde{\alpha}))^{n-1}$.  Using the same  approach as in the proof Theorem~\ref{thm:DeltaR} leads to
\begin{align}
\left|\sin\left( \frac{ \alpha^* - \alpha}{2} \right) \right| \le \left|\sin\left( \frac{ \alpha^* - \tilde{\alpha}}{2} \right) \right| + \frac{ \tan(\alpha_n) \log n}{2 n}
\label{eq:sin_alpha_beta_decomp}.
\end{align}
By Lemma~\ref{lem:alpha_bound} it follows that 
\begin{align}
    \ex{ \Delta_1^p}^{1/p} & \le C'  \frac{ \sec(\tilde{\alpha})  [ \log(n) + p ]}{\sqrt{n} } + 2 \sqrt{n} \csc(\tilde{\alpha}) 2^{-M/p} 
\end{align}
where $C'$ is a numerical constant. 
Since the secant function is non-decreasing on $[0, \pi/2]$ we have $\sec(\alpha_n) \le \sec(\beta)$, and so the first term is bounded from above by $C'_\beta p$ for some number $C'_\beta$. The second term satisfies
\begin{align*}
 \sqrt{n} \csc(\tilde{\alpha}) 2^{-M/p} & =   \sqrt{n} \csc (\tilde{\alpha}) \exp\left\{\frac{ \sqrt{n}}{p} (\csc \tilde{\alpha})^{n-1} \log 2 \right\}  \\
& \le     \sqrt{n} \csc (\tilde{\alpha}) \exp\left\{ - \frac{ \sqrt{n}}{p} \csc (\tilde{ \alpha}) \log 2 \right\}  \\
& \le   \frac{p}{ e \log 2}.
\end{align*}
\end{proof}

}

\section{Proof of Lemma~\ref{lem:alpha_bound}}
\label{proof:lem:alpha_bound}

Let integers $n,M$ and $\alpha \in (0, \pi/2)$ be such that $M = \sqrt{n} (\csc \alpha)^{n-1}$.  Recall that the goal is to bound the absolute moments of the random variable $\sin( (\alpha^* -\alpha)/2)$ where $\alpha^*$ is drawn according to \eqref{eq:alpha_CDF}. We consider two cases. First, on the event $\alpha^* > \pi/2$, we can use the trivial upper bound of one. Note that $Q_n(0) = 1/2$ and so the probability of this event is $2^{-M}$. Alternatively, on the event $\alpha^* \le \pi/2$,  we use basic trigonometric identities to write  
\begin{align}
 \sin\left( \frac{\alpha^* - \alpha}{2} \right)  & = \cos\left( \frac{ \alpha^* - \alpha}{2} \right) \frac{ (\sin (\alpha^*)  - \sin (\alpha)) }{( \cos (\alpha^*) + \cos (\alpha) )}.
\end{align}
Noting that $\cos (\alpha^*) \ge 0$ leads to the  upper bound $|\sin( (\alpha^* -\alpha)/2)| \le \tan(\alpha) \, \xi$ where  
\begin{align}
    \xi \coloneqq \left| \frac{ \sin (\alpha^*)}{\sin (\alpha)} - 1 \right|.
\end{align}
Combining these two cases yields
\begin{align}
    \ex{  \left| \sin\left( \frac{\alpha^* - \alpha}{2} \right) \right|^p}^{1/p} 
     & \le  \tan(\alpha)  \ex{  \xi^p}^{1/p}  +  2^{-M/p}. 
     \label{eq:sin_alpha_bound}
\end{align}

The remaining step in the proof is to provide an upper bound moments of $\xi$. We need the following bounds on the function $Q_n(s)$ defined in \eqref{eq:Qn}.
\begin{lem}
For integer $n\ge 2$ and $s \in [0,1]$, 
\begin{align}
\frac{  (1 - s^2)^{(n-1)/2}}{\sqrt{2\pi} \mu_n}    \le   Q_n(s)     \le  \frac{  (1 - s^2)^{(n-1)/2}  }{\sqrt{2\pi} \mu_n \max(s,n^{-1})  }, \label{eq:Qbnd}
\end{align}
where
\begin{align}
    \mu_n \coloneqq \frac{\sqrt{2}\Gamma(\frac{n+1}{2})}{ \Gamma(\frac{n}{2})} \label{eq:mu_n}
\end{align}
is the mean of the chi distribution with $n$ degrees of freedom. 
\end{lem}
\begin{proof}
Making the change of variables $u = (1 -t^2)$ and and using the relation $\mu_{n-1} \mu_n = n-1$ leads to
\begin{align*}
Q_n(s) & = \frac{1}  {\sqrt{2\pi} \mu_n}   \int_0^{1-s^2}  \frac{ (n-1) u^{(n-3)/2} }{ 2 \sqrt{1 - u}} \, du.
\end{align*}
Noting that the denominator in the integral satisfies $ s \le  \sqrt{ 1 - u} \le 1 $ and then integrating gives the double inequality:
  \begin{align}
\frac{  (1 - s^2)^{(n-1)/2}}{\sqrt{2\pi} \mu_n}    \le     Q_n(s)     \le  \frac{  (1 - s^2)^{(n-1)/2}  }{\sqrt{2\pi} \mu_n s }. \label{eq:Q_ub_a}
\end{align}

To prove an upper bound with $s$ replaced by $n^{-1}$ in the denominator, observe that for $n \ge 3$, 
\begin{align*}
 Q_n(s) &= \frac{(n-1) }{\sqrt{2\pi} \mu_{n} } \int_s^1 (1 - t^2)^{(n-3)/2} \, dt\\
& \le \frac{(n-1)}{\sqrt{2\pi} \mu_n}(1 -s^2)^{(n -3)/2} (1- s)\\
& = \frac{ (n-1)(1 -s^2)^{(n -1)/2} }{\sqrt{ 2\pi } \mu_n  (1 + s)}. 
\end{align*}
Meanwhile, for  $n=2$, direct calculation reveals
\begin{align*}
Q_2(s) &= \frac{1}{\sqrt{ 2 \pi} \mu_2 } \left( \frac{\pi}{2} -   \arcsin(s) \right)\\
& \le \frac{2 (1 - s^2)^{1/2} }{\sqrt{ 2 \pi} \mu_2 }.
\end{align*}
Combining these upper bounds completes the proof.
\end{proof}

\begin{lem}\label{lem:CDF_S_bounds}For integers $n,M\ge 2$ and $u \in [0,1]$,
\begin{align}
  \pr{\sin (\alpha^*)  \le u }
& \le  \frac{   u^{n-1}  M}{ \sqrt{ 2 \pi} \mu_n \max( \sqrt{1- u^2}, n^{-1}) } \label{eq:sin_a_bound1}\\
  \pr{ \sin (\alpha^*) \ge  u } 
& \le  \exp\left(  - \frac{   u^{n-1} M}{ \sqrt{ 2 \pi} \mu_n }  \right).\label{eq:sin_a_bound2}
\end{align}
\end{lem}
\begin{proof}
Let $s = \sqrt{1-u^2}$. Expressing the event $\{ \sin (\alpha^*) < u\}$ in terms of $\cos (\alpha^*)$, we can write
\begin{align*}
    \pr{ \sin (\alpha^*) \le u} &= \pr{ \cos (\alpha^*) \ge s} + \pr{ \cos (\alpha^*) \le - s}\\
    & =1 - (1 - Q_n(s))^M + (1 - Q_n(-s))^M\\
    & =1 - (1 - Q_n(s))^M +  Q_n(s)^M\\
    & \le M Q_n(s).
\end{align*}
Here, the third step follows because $Q_n(-s)  = 1- Q_n(s)$ and the last step is due to the basic inequality $1 - (1- q)^p -q^p \le p q$ for $q \in [0,1]$ and $p \ge 2$. Combining this inequality with the upper bound on $Q_n(s)$ in   \eqref{eq:Qbnd} gives \eqref{eq:sin_a_bound1}. 
For the complementary event, we have
\begin{align*}
    \pr{ \sin (\alpha^*) \ge u} & = (1 - Q_n(s))^M - (1 - Q_n(-s))^M\\
    & \le  \exp( - M Q_n(s)),
\end{align*}
where we have used the basic inequality $1 - q \le e^{-q}$ for all $q \in [0,1]$. Combining with the lower bound on $Q_n(s)$ in \eqref{eq:Qbnd} gives \eqref{eq:sin_a_bound2}.
\end{proof}

To bound the distribution of $\xi$ we consider two cases. First, conditional on the event $\sin (\alpha^*) \le \sin (\alpha)$, we can write 
\begin{align}
    \xi & = 1 - \frac{\sin (\alpha^*)}{\sin (\alpha)}\le \log \frac{ \sin (\alpha)}{\sin (\alpha^*)},
\end{align}
where we used the basic inequality $1 - 1/x \le  \log x$ for $x > 0$. 
In view of \eqref{eq:sin_a_bound1} and the assumption $M = \sqrt{n} (\csc \alpha)^{n-1}$, it follows that for $t \ge 0$,
\begin{align*}
\pr{ \xi \ge t  , \sin(\alpha^*) < \sin (\alpha)}  & \le \pr{ \sin(\alpha^*) \le  \sin( \alpha ) e^{-t} }  \\
& \le \frac{ \sqrt{n}  \exp\left(-(n-1)t\right)}{ \sqrt{2 \pi} \mu_n \max( \cos(\alpha) , n^{-1} )}\\
& \le  \min(\csc(\alpha),n)  \exp\left(-(n-1)t\right),\label{eq:xi_bound1}
\end{align*}
where the last step uses the lower bound $\mu_n \ge \sqrt{n-1/2}$.


Next, conditional on the event $\sin (\alpha^*) \ge \sin (\alpha)$, we can write
\begin{align}
    \xi & =  \frac{\sin( \alpha^*)}{\sin (\alpha)} -1 \le \frac{ 1}{n-1} \left[ \left(\frac{\sin (\alpha^*)}{\sin (\alpha)} \right)^{n-1}  -1 \right],
\end{align}
where we used the basic inequality $x -1 \le (x^p -1)/p$ for all $x \ge 0$ and $p \ge 1$. In view of \eqref{eq:sin_a_bound2} and the assumption $M = \sqrt{n} (\csc \alpha)^{n-1}$, it then follows that for $t \ge 0$, 
\begin{align}
& \pr{ \xi \ge t  , \sin (\alpha^*) \ge \sin (\beta)} \nonumber \\
& \qquad \le \pr{ \sin(\alpha^*) \ge  (\sin (\beta) ) \left[1+ (n-1) t \right]^{\frac{1}{n-1}}  } \nonumber \\
& \qquad  \le \exp\left(  -  \frac{ \sqrt{n} [ 1 + (n-1) t] }{ \sqrt{ 2 \pi} \mu_n}    \right) \nonumber \\
& \qquad \le \exp\left(  -  \frac{(n-1) t }{ \sqrt{ 2 \pi}} \right),\label{eq:xi_bound2}
\end{align}
where the last step uses the upper bound $\mu_n \le \sqrt{n}$.

With \eqref{eq:xi_bound1} and \eqref{eq:xi_bound2} in hand, we can upper bound the moments of $\xi$ according to 
\begin{align*}
    \ex{ \xi^p}  & = \int_0^\infty \pr{ \xi \ge t^{1/p}} \, d t\\
    & \le \frac{\log( \min(\sec\alpha, n))}{n-1}  + \frac{1 + \sqrt{2 \pi}}{ n-1}   \Gamma(p+1),
\end{align*}
where $\Gamma(z) = \int_0^\infty x^{z-1} \exp( - x) \, dx$ is the gamma function. Combining this inequality with \eqref{eq:sin_alpha_bound} and using the basic upper bound $\Gamma(z+1) \le z^z$ completes the proof of Lemma~\ref{lem:alpha_bound}. 

\section{
\label{sec:proof:Delta2}
Proof of Lemma~\ref{lem:Delta2}}

\begin{lem}\label{lem:Delta2}
Let $A\sim \Ncal(0,1)$ and $B\sim \chi_{n-1}$. For all $p \ge 1$ and $n \geq 2$,
\begin{align*}
\left(\ex{ \left|\sqrt{A^2 + (B- \sqrt{n})^2}\right|^p}\right)^{1/p} \le C \sqrt{p},
\end{align*}
where $C\leq \sqrt{2} + e^{1/e}$.
\end{lem}

Define $\Delta_2 \coloneqq \sqrt{A^2 + (B-\sqrt{n})^2}$. Minkowski's inequality implies
\begin{align}
\left(\ex{ \Delta_2^p}\right)^{1/p} \le \ex{\Delta_2} + \left(\ex{ \left|\Delta_2- \ex{\Delta_2} \right|^p}\right)^{1/p}
\end{align}
Recall that $\mu_n$ of \eqref{eq:mu_n} is the mean of the chi-distribution with $n$ degrees of freedom. By Jensen's inequality, $\ex{\Delta_2}$ can be upper bounded as
\begin{align*}
\ex{\Delta_2}^2 & \le \ex{\Delta_2^2} \\
& = \var(A) + \var(B) + (\mu_{n-1}-\sqrt{n})^2 \\
& = 1 + (n-1)-\mu_{n-1}^2  + (\mu_{n-1}-\sqrt{n})^2 \leq 2,
\end{align*}
where we used that $\sqrt{n-1} \leq \mu_n < \sqrt{n}$. To bound the deviation about the mean, we use the fact that $B$ can be expressed as $B = \|W\|$ where $W$ is an $(n-1)$-dimensional vector with i.i.d.\ standard Gaussian entries. This allows us to write $\Delta_2 = \psi(A,W)$, where $\psi: \reals \times \reals^{n-1} \to \reals$ is given by $ \psi(a,w)  = (a^2 + \left(  \|w\| - \sqrt{n} \right)^2)^{1/2}$. The function $\psi$ is 1-Lipschitz continuous because it is the  composition of 1-Lipschitz functions. Therefore, we can apply the Tsirelson-Ibragimov-Sudakov Gaussian concentration inequality (see e.g., \cite[Theorem 5.2.2]{vershynin2018high}) to conclude that $\Delta_2 - \ex{\Delta_2}$ is sub-Gaussian with variance parameter one. Consequently
$\left(\ex{ |\Delta_2 - \ex{\Delta_2} |^p}\right)^{1/p} \le  e^{1/e} \sqrt{p}$. 
\QEDA

\color{black}

\section{Lipschitz Estimation in AWGN Model}


\begin{lem}\label{lem:Lip_approx}
Suppose that $(U,X)$ are random vectors in $\reals^m \times \reals^n$ with finite second moments. Let $Z = X + \sigma W$ where $\sigma > 0$ is known and $W \sim \normal(0,I_n)$ is independent Gaussian noise. Define $f^*(z) = \ex{ U \mid Z  = z}$. For each $\eps > 0$ there exists a number $L$ and estimator $f: \reals^n \to \reals^m$ with $\|f\|_\mathrm{Lip}\le L$ such that
\begin{align}
\ex{ \|U - f(Z)\|^2} \le \ex{ \|U - f^*(Z)\|^2} + \eps. \label{eq:Lip_approx}
\end{align}
\end{lem}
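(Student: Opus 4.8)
## Proof Plan for Lemma~\ref{lem:Lip_approx}

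The plan is to approximate the Bayes estimator $f^*$ by a Lipschitz function using a standard mollification argument, exploiting the fact that $f^*$ composed with the Gaussian channel is already quite regular. First I would recall the key structural fact: since $Z = X + \sigma W$ with $W$ Gaussian, the conditional expectation $f^*(z) = \ex{U \mid Z = z}$ can be written as a ratio of convolutions with the Gaussian kernel $\varphi_\sigma$, namely $f^*(z) = \frac{\ex{U \varphi_\sigma(z - X)}}{\ex{\varphi_\sigma(z-X)}}$, which is smooth (infinitely differentiable) in $z$. However, smoothness alone does not give a global Lipschitz bound, so the truncation/approximation step is still needed.

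The core of the argument is an $L^2(P_Z)$-density statement: the Bayes estimator $f^*$ minimizes $\ex{\|U - f(Z)\|^2}$ over all measurable $f$, and the map $f \mapsto \ex{\|U - f(Z)\|^2} = \ex{\|U - f^*(Z)\|^2} + \ex{\|f(Z) - f^*(Z)\|^2}$ shows that the excess risk of any $f$ is exactly $\|f - f^*\|_{L^2(P_Z)}^2$. Hence it suffices to find a Lipschitz $f$ with $\|f - f^*\|_{L^2(P_Z)}^2 \le \eps$. Since $f^* \in L^2(P_Z)$ (because $\ex{\|f^*(Z)\|^2} \le \ex{\|U\|^2} < \infty$ by Jensen), and since Lipschitz functions (indeed, $C_c^\infty$ functions) are dense in $L^2$ of any finite Borel measure on $\reals^n$, such an $f$ exists. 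I would carry this out in three steps: (i) truncate $f^*$ to $f^*_R(z) = f^*(z)\mathbf{1}_{\|z\| \le R}$, choosing $R$ large enough that $\ex{\|f^*(Z) - f^*_R(Z)\|^2} \le \eps/3$; (ii) approximate the truncated bounded function in $L^2(P_Z)$ by a continuous compactly supported function $g$ within $\eps/3$ (Lusin's theorem plus boundedness, or standard density of $C_c$ in $L^2$); (iii) mollify $g$ by convolution with a smooth bump, $f = g * \phi_\delta$, which is Lipschitz (its gradient is bounded by $\|g\|_\infty \|\nabla \phi_\delta\|_{L^1}$) and can be made within $\eps/3$ in $L^2(P_Z)$ by taking $\delta$ small, using continuity of translation in $L^2$ together with dominated convergence against $P_Z$. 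Combining the three bounds via the triangle inequality in $L^2(P_Z)$ gives $\|f - f^*\|_{L^2(P_Z)} \le \sqrt{\eps}$, hence excess risk at most $\eps$; rename if a cleaner constant is wanted.

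The main obstacle — though it is more of a technical bookkeeping point than a genuine difficulty — is handling the vector-valued and possibly unbounded nature of $f^*$ when passing to compactly supported continuous approximants: one must ensure the truncation in step (i) is controlled purely by the finite second moment of $U$ (via $\ex{\|f^*(Z)\|^2 \mathbf{1}_{\|Z\|>R}} \to 0$ by dominated convergence), rather than assuming any a priori boundedness. Everything else is the routine mollification machinery; the Lipschitz constant $L$ will depend on $\eps$ (through $\delta$ and through $\sup \|g\|$), which is all that is claimed. An alternative, slightly slicker route avoiding Lusin's theorem is to note that $f^*$ is already $C^\infty$ as remarked above, so step (ii) can be replaced by directly truncating $f^*$ smoothly (multiplying by a smooth cutoff supported on $\|z\| \le 2R$ equal to $1$ on $\|z\| \le R$), which is automatically Lipschitz on the whole of $\reals^n$ with constant depending on $R$ and on $\sup_{\|z\|\le 2R}\|\nabla f^*(z)\|$; then only steps (i) and a single truncation error bound are needed. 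I would present this second route as it is shorter and self-contained given the explicit convolution formula for $f^*$.
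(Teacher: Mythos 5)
Your proposal is correct, but it takes a genuinely different route from the paper. You first invoke the Pythagorean identity $\ex{ \|U - f(Z)\|^2} = \ex{ \|U - f^*(Z)\|^2} + \ex{ \|f(Z) - f^*(Z)\|^2}$, which reduces the lemma to approximating $f^*$ in $L^2(P_Z;\reals^m)$ by a Lipschitz function, and then you settle that by standard density/mollification arguments (truncation controlled by $\ex{\|f^*(Z)\|^2\mathbf{1}(\|Z\|>R)}\to 0$, then cutoff or Lusin plus convolution). The paper instead truncates in the $(U,X)$-space: it introduces the event $B=0$ on which $\|U\|\vee\|X\|\le T$, takes $f(z)=\ex{U\mid Z=z, B=0}$, bounds the excess risk by tail second moments of $U$ via two applications of the orthogonality of conditional expectation, and controls $\|f\|_\mathrm{Lip}$ through an explicit Stein-type Jacobian identity (its Lemma on the differential of a Gaussian-channel posterior mean), giving $\|f\|_\mathrm{Lip}\le T^2/\sigma^2$. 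The trade-off: the paper's argument leans on the Gaussian channel structure and produces an explicit, interpretable Lipschitz constant tied to the truncation level and the noise variance, whereas your argument is more elementary and more general --- the main route uses nothing about the channel beyond $f^*\in L^2(P_Z)$ and would work for any observation model --- at the cost of an $L$ that is only implicitly determined by the mollification parameters. Your ``slicker'' variant (smooth cutoff applied directly to the smooth $f^*$) is also fine, provided you justify differentiation under the integral in the convolution formula using the boundedness of $\varphi_\sigma$ and its derivatives together with $\ex{\|U\|}<\infty$; both variants are complete proofs of the stated lemma.
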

\begin{proof}
Given $T > 0$, let $B$ be a binary random variable that is equal to zero if $ \{ \|U\| \vee \|X\| \le T\}$ and one otherwise. Define $g(z,b)= \ex{U \mid Z =z, B = b}$ to be the conditional expectation given $(Z,B)$ and let $f(z) = g(z,0)$. Two different applications of the law of total variance yields
\begin{align*}
\ex{ \|U - f^*(Z)\|^2}  &= \medmath{
\ex{ \|U - g(Z,B)\|^2}   + \ex{ \| g(Z,B) - f^*(Z)\|^2}}\\
\ex{ \|U - f(Z)\|^2}  & = \medmath{ \ex{ \|U - g(Z,B)\|^2}  + \ex{ \| g(Z,B) - f(Z)\|^2} } .
\end{align*}
Meanwhile, noting that $f(Z) = g(Z,B)$ whenever $B = 0$, we have
\begin{align*}
\|g(Z,B)- f(Z)\| & \le B  \| g(Z,B) - f(Z)\| \\
& \le B\| g(Z,B)\| +  B T 
\end{align*}
where the second step follows from the triangle inequality and that fact that $f(y)$ lies in a Euclidean ball of radius $T$. Combining the above displays with the inequality $(a+b)^2 \le 2 a^2 +2 b^2$ leads to
\begin{align*}
& \ex{ \|U - f(Z)\|^2}  -  \ex{ \|U - f^*(Z)\|^2} \\
& \qquad  \le  2 \ex{B \|g(Z, B)\|^2} + 2 \ex{ B} T^2\\
& \qquad  \le  2 \ex{\one(\|U\| > T) \| U \|^2} + 2 \pr{ \|U\| > T}  T^2. 
\end{align*}
By the assumption that $\|U\|$ has finite second moment, this upper bound converges to zero as $T$ increases. Thus, for each $\eps > 0$, there exists $T$ large enough such that \eqref{eq:Lip_approx} holds. 

Next, we will verify that $f$ has a finite Lipschitz constant. 
Lemma~\ref{lem:differential} below implies that the Jacobian of $f$ is given by
\begin{align*}
    \frac{\partial f(z)}{ \partial z} = \frac{ \cov( U, X \mid Z = z, B = 0)}{ \sigma^2}.
\end{align*}
By the Cauchy-Schwarz inequality and the definition of $B$, it follows that $\|\cov( U, X \mid Z = z, B = 0)\| \le T^2$, uniformly for all $z$, and thus $\|f\|_\mathrm{Lip} \le T^2/\sigma^2.$
\end{proof}

\begin{lem}\label{lem:differential}
Let $X$ be a $n$-dimensional random vector with $\ex{ \rho\left(X \right)} < \infty$, where $\rho(x)$ is the standard Gaussian density in $n$ dimensions, and let $Y\sim \Ncal(X,\sigma^2I_n)$. Let $h : \mathbb R^n \to \mathbb R^m$ be a measurable function such that 
\[
\phi(y) \coloneqq \mathbb E \left[ h(X) | Y = y \right],\quad y \in \mathbb R^n,
\]
is defined for any $y \in \reals^n$. The Jacobian of $\phi$  is given by
\[
J_{\phi}(y) = \frac{1}{\sigma^2}\cov \left( X,h(X) \mid Y = y\right).
\]
\end{lem}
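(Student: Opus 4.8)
The plan is to write $\phi$ as a ratio of integrals via Bayes' rule and differentiate. Let $\mu$ be the law of $X$ on $\reals^n$ and, for $s>0$, write $\rho_s(z) = (2\pi s^2)^{-n/2}\exp(-\norm{z}^2/(2s^2))$ for the $\normal(0,s^2 I_n)$ density (so $\rho=\rho_1$). Since $Y = X + \sigma W$ with $W\sim\normal(0,I_n)$ independent of $X$, the marginal density of $Y$ is $p(y) \triangleq \int \rho_\sigma(y-x)\,\mu(\dd x)$, which is finite and strictly positive for every $y$, and the conditional law of $X$ given $Y=y$ has density $\rho_\sigma(y-x)/p(y)$ with respect to $\mu$. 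Hence
\[
\phi(y) = \frac{N(y)}{p(y)},\qquad N(y)\triangleq \int h(x)\,\rho_\sigma(y-x)\,\mu(\dd x)\in\reals^m,
\]
and $N(y)$ is well defined precisely under the hypothesis that $\phi(y)$ exists.

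Next I would differentiate $p$ and $N$ in $y$ using $\nabla_y \rho_\sigma(y-x) = -\sigma^{-2}(y-x)\rho_\sigma(y-x)$. Granting for the moment that $\nabla_y$ may be moved inside the integrals, and viewing gradients of scalars as row vectors and the Jacobian of a $\reals^n\to\reals^m$ map as an $m\times n$ matrix, each integral against $\rho_\sigma(y-\cdot)/p(y)$ is a conditional expectation given $Y=y$, so
\[
\nabla p(y) = \frac{p(y)}{\sigma^2}\big(\ex{X\mid Y=y}^\top - y^\top\big),\qquad
J_N(y) = \frac{p(y)}{\sigma^2}\big(\ex{h(X)X^\top\mid Y=y} - \phi(y)\,y^\top\big).
\]
The quotient rule $J_\phi = J_N/p - \phi\,\nabla p /p$ then gives
\[
J_\phi(y) = \frac{1}{\sigma^2}\big(\ex{h(X)X^\top\mid Y=y} - \phi(y)\,y^\top\big) - \frac{1}{\sigma^2}\,\phi(y)\big(\ex{X\mid Y=y}^\top - y^\top\big),
\]
and the two $\phi(y)\,y^\top$ terms cancel, leaving $J_\phi(y) = \sigma^{-2}\big(\ex{h(X)X^\top\mid Y=y} - \ex{h(X)\mid Y=y}\,\ex{X\mid Y=y}^\top\big) = \sigma^{-2}\cov\!\left(X,h(X)\mid Y=y\right)$, which is the claim (up to the transpose convention used for the matrix-valued covariance).

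The one step requiring care — and the main obstacle — is justifying the interchange of $\nabla_y$ with the integrals defining $p$ and $N$. I would handle this by a dominated-convergence argument localized near an arbitrary $y_0$: restricting to $\norm{y-y_0}\le 1$, the elementary bound $\norm{y-x}^2 \ge \tfrac12\norm{y_0-x}^2 - 1$ gives $(1+\norm{y-x})\,\rho_\sigma(y-x)\le C_{y_0}\,\rho_{\sigma'}(y_0-x)$ uniformly in such $y$ for a suitable $\sigma'>\sigma$, so both the difference quotients and the candidate derivatives of the integrands are dominated by $C_{y_0}\,\norm{h(x)}\,\rho_{\sigma'}(y_0-x)$. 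It then remains to check that this dominating function is $\mu$-integrable. When $h$ is bounded — which is the case in the only place the lemma is used, the proof of Lemma~\ref{lem:Lip_approx}, where conditioning on $B=0$ forces $\norm{h(X)}\le T$ — this is immediate since $\rho_{\sigma'}$ is bounded and $\mu$ is a probability measure; in general it follows from the integrability hypothesis on $h$ together with the fact that $\phi$, hence $N$, is finite at every point (rewriting $\rho_{\sigma'}$ as a Gaussian mixture of translates of $\rho_\sigma$). Finally, I would note that the Jacobian formula, combined with a Cauchy–Schwarz bound on $\norm{\cov(X,h(X)\mid Y=y)}$, is exactly what Lemma~\ref{lem:Lip_approx} invokes to control $\norm{f}_{\mathrm{Lip}}$.
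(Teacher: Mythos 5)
Your proposal is correct and follows essentially the same route as the paper's proof: write $\phi$ via Bayes' rule as a ratio of Gaussian-weighted integrals, differentiate under the integral sign, apply the quotient rule, and observe that the centering terms cancel to leave $\sigma^{-2}\cov(X,h(X)\mid Y=y)$. The only difference is that you justify the interchange of differentiation and integration by an explicit dominated-convergence argument, whereas the paper simply invokes a standard theorem (Lehmann--Romano, Thm.~2.7.1) for that step.
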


\subsubsection*{Proof of Lemma~\ref{lem:differential}}
Set $\rho_\sigma(x) \coloneqq \rho(x/\sigma)/\sigma$, 
$P_Y(y) \coloneqq \ex{ \rho_{\sigma}\left(X-y \right)}$, and 
$x_y \coloneqq \ex{X | Y=y }$. 
From Bayes rule we have
\[
\phi(y) = \frac{\ex{ h(X)\rho_{\sigma}\left(X-y \right)}} {
\ex{ \rho_{\sigma}\left(X-y \right) }} = \frac{ \ex{  h(X)\rho_{\sigma}\left(X-y \right)}} {
P_{Y}(y)}. 
\]
It follows from \cite[Thm. 2.7.1]{lehmann2006testing} that we may differentiate with respect to $y_j$ within the expectation. We get
\begin{align*}
& [J_{\phi}(y)]_{i,j} = \frac{\partial \phi_i}{\partial y_j}  \\
& =  \frac{ \ex{  h_i(X) \rho_{\sigma} \left( X-y  \right) (X_j-y_j) } }
{ \sigma^2 P_{Y}(y)}
\\ 
& \quad - \frac{ \ex{ h_i(X) \rho_{\sigma} \left( X-y  \right) } \ex{ \rho_{\sigma} \left(X-y \right) (X_j-y_j) }} {\sigma^2 P_{Y}^2(y) } \\
& = \frac{ \ex{ \rho_{\sigma} \left(X-y \right) \left( h_i(X) - \phi_i(y) \right) (X_j-y_j) } } {
\sigma^2 P_{Y}(y)}  \\
& =  \frac{ \ex{ \rho_{\sigma}^2 \left(X-y \right) \left( h_i(X) - \phi_i(y) \right) X_j } } {
\sigma^2 P_{Y}(y)}. 
\end{align*}
The last transition implies  
\begin{align*}
& \frac{\partial \phi_i}{\partial y_j} = 
 \ex {\frac{\rho_{\sigma} \left(X-y \right)}{\sigma^2 P_Y(y)} \left( h_i(X) - \phi_i(y) \right) (X_j-a)},
\end{align*}
for any constant $a\in \mathbb R$. It follows that
\begin{align*}
J_{\phi}(y) & =\ex{ \left(X - x_y \right) \left(h(X) - \phi(y)\right)^T \frac{\rho_{\sigma}(X-y)}{ \sigma^2 P_Y(y)} } \\
& \frac{1}{\sigma^2} \int_{\mathbb R^n} \left(x-{x_y} \right) \left(h(x)-\phi(y)\right)^T P_{X|Y}(dx|Y=y) \\
& = \frac{1}{\sigma^2} \mathbb E \left[ \left(X-x_y \right) \left(h(X)-\phi(y)\right)^T  \mid Y=y\right] \\
& = \frac{1}{\sigma^2} \cov \left(X,h(X) | Y=y \right).
\end{align*}

\section{Lipschtiz Continuity of AMP \label{app:AMP}}

\begin{prop}
Let $A \in \reals^{n \times d_n}$ be a random matrix with i.i.d. entries $\Ncal(0,1/n)$. Assume that $n/d_n\to \delta \in (0,\infty)$. Denote by $z\to \theta^{t}_{\AMP}(z)$ the result of $t$ iterations of AMP using a sequence of local non-linearity functions $\{\eta_1,\ldots,\eta_t\}$ as in \eqref{eq:SE} and set
$L^{\AMP}_{n,t} = \|\theta^t_{\AMP}\|_{\Lip}$. 
If $\eta_k$ is $L_k$-Lipschitz for $k=1,\ldots,t$, then with probability one there exists $K_t$ such that $\sup_n L^{\AMP}_{n,t} \leq K_t$.
\end{prop}

\begin{proof}
Use the tail bound on the maximal eigenvalue of a random matrix with sub-Gaussian entries from \cite[Thm 4.4.5]{vershynin2018high} to deduce that there exists a constant $c$, independent of $n$, such that 
\[
\Pr\left( \sqrt{n}\|A\|_2 \leq c(\sqrt{n}(1+\sqrt{\delta})+a)  \right) \geq 1-2e^{-a^2}.
\]
For $C = c(2+\sqrt{\delta})$, define the event
\[
E_n = \{ \|A\|_2 \leq C \}.
\]
Using $a=\sqrt{n}$, the Borel-Cantelli Lemma applied to the sequence $\{E_n\}$ implies that the event
\[
G \coloneqq \left\{ \exists n_0 \,:\,\|A\|_2 \leq C,\quad \forall n\geq n_0 \right\}
\]
occurs with probability one. Conditioning on $G$ and given such $n_0$, we consider the $t$-th iteration of AMP for reconstructing $\theta$ from $z = A\theta + W$ as given by \eqref{eq:AMP1} and \eqref{eq:AMP2}. For $\eta_t$ applied element-wise to vectors $x,r \in \reals^n$, we have
\[
\left\| \eta_t(u) - \eta_t(x) \right\| \leq L_t \left\|u-x\right\|,
\]
and 
\[
\left| \langle 
     \eta'_t (x) \rangle \right| \leq L_t. 
\]
For $x \in \reals^n$, denote by $\theta^t(x)$ and $r^t(x)$
the result of applying $t$ iterations of \eqref{eq:AMP2} to $x$. We have
\begin{align}
    & \left\| \theta^{t+1}(x) - \theta^{t+1}(\tilde{x})  \right\| \nonumber \\
    & = \left\| \eta_t \left( A^\top r^t(x) + \theta^t(x) \right) - \eta_t \left( A^\top r^t(\tilde{x}) + \theta^t(\tilde{x}) \right) \right\| \nonumber \\
    & \leq L_t  \left\|  A^\top \left( r^t(x) -  r^t(\tilde{x}) \right) + \theta^t(x) - \theta^t(\tilde{x}) \right\| \nonumber \\
    & \leq L_t \left\|  A^\top \left( r^t(x) -  r^t(\tilde{x}) \right)  \right\| + L_t \left\| \theta^t(x) - \theta^t(\tilde{x}) \right\| \nonumber \\
    & \leq L_t C \left\| r^t(x) -  r^t(\tilde{x})  \right\| + L_t \left\| \theta^t(x) - \theta^t(\tilde{x}) \right\|. \label{eq:th_rec}
\end{align}
Furthermore,
\begin{align}
    & \left\| r^t(x) - r^t(\tilde{x})  \right\| \leq  \left\| x- \tilde{x} \right\| +  \left\| A \left(\theta^t(x) - \theta^t(\tilde{x}) \right) \right\| \nonumber \\
    & \quad + \frac{1}{n}\left\| r^{t-1}(x) \sum_{i=1}^n \eta'_{t-1} \left([A^\top r^{t-1}(x)+\theta^t(x)]_i  \right) \right. \nonumber \\
    & \qquad \qquad  - \left.
    r^{t-1}(\tilde{x}) \sum_{i=1}^n \eta'_{t-1} \left([A^\top r^{t-1}(\tilde{x})+\theta^t(\tilde{x})]_i  \right) \right\| \nonumber \\
    & \leq \left\| x- \tilde{x} \right\| + C \left\| \theta^t(x) - \theta^t(\tilde{x}) \right\|\nonumber \\
    & \qquad \qquad + L_{t-1} \left\|r^{t-1}(x) - r^{t-1}(\tilde{x}) \right\| \label{eq:u_rec} .
\end{align}
We now prove by induction that for $t=1,\ldots$, there exists $K_t$ and $R_t$ such that
\begin{align}
    &\left\| \theta^t(x) - \theta^t(\tilde{x}) \right\| \leq K_t \left\|x - \tilde{x} \right\| \label{eq:x_t}\\
    & \left\| r^{t-1}(x) - r^{t-1}(\tilde{x}) \right\| \leq R_{t-1} \left\|x - \tilde{x} \right\|. \label{eq:u_t}
    \end{align}
For $t=1$, we have
\begin{align*}
    & \left\| \theta^0(x) - \theta^0(\tilde{x}) \right\| = 0, \\
    &\left\| u^0(x) - u^0(\tilde{x}) \right\| = \left\| x - \tilde{x} \right\|, \\
    &\left\| \theta^1(x) - \theta^1(\tilde{x}) \right\| \leq L_0 C \left\|x - \tilde{x} \right\|,
\end{align*}
and for the second inequality we take $R_0 =0$. Assume now that for all $k=1,\ldots,t-1$, there are $K_k$ and $R_k$ such that
\begin{align*}
    &\left\| \theta^k(x) - \theta^k(\tilde{x}) \right\| \leq K_k \left\|x - \tilde{x} \right\|, \\
    & \left\| r^{k-1}(x) - r^{k-1}(\tilde{x}) \right\| \leq R_{k-1} \left\|x - \tilde{x} \right\|. 
    \end{align*}
From \eqref{eq:th_rec}, \eqref{eq:x_t}, and \eqref{eq:u_t}, we obtain
\begin{align*}
& \left\| \theta^t(x) - \theta^t(\tilde{x}) \right\| \leq L_{t-1} C \left\| r^{t-1}(x) - r^{t-1}(\tilde{x}) \right\| \\
& \qquad + L_{t-1} \left\| \theta^{t-1}(x) - \theta^{t-1}(\tilde{x}) \right\|, \\
& \leq L_{t-1} C R_{t-1} \left\|x-\tilde{x}\right\| + L_{t-1} K_{t-1} \left\|x-\tilde{x}\right\| \\
& = \left( L_{t-1}C R_{t-1}+L_{t-1} K_{t-1} \right)  \left\|x-\tilde{x}\right\|.
\end{align*}
From \eqref{eq:u_rec}, \eqref{eq:x_t}, and \eqref{eq:u_t}, we obtain
\begin{align*}
& \left\| r^{t-1}(x) - r^{t-1}(\tilde{x}) \right\| \leq \left\|x-\tilde{x} \right\| + C \left\| \theta^{t-1}(x) - \theta^{t-1}(\tilde{X}) \right\| \\
& \qquad + L_{t-2} \left\|r^{t-2}(x)-r^{t-2}(\tilde{x}) \right\| \\
& \leq \left( 1 +  C_A K_{t-1} + L_{t-2} R_{t-2} \right)\left\|x-\tilde{x} \right\|.
\end{align*}
It follows that both \eqref{eq:x_t} and \eqref{eq:u_t} hold with $k=t$. \par
We have shown that with probability one there exists $n_0$ such that, for each $t \in \mathbb N$, there exists $K_t$ for which
\begin{align}
    &\left\| \theta^t(x) - \theta^t(\tilde{x}) \right\| \leq K_t \left\|x - \tilde{x} \right\|,\quad \forall n\geq n_0.
\end{align}
It follows that for each $t$, 
$\sup_n L^{\AMP}_{n,t}\leq K_t$.
\end{proof}

\bibliographystyle{IEEEtran}
\bibliography{IEEEfull,SphericalCodes}

\end{document}